\documentclass[12pt]{iopart}
\usepackage{graphicx}
\usepackage{subcaption}
\usepackage{xcolor,cite}
\usepackage{amsthm}
\usepackage{amssymb}
\usepackage{bm}
\usepackage[utf8]{inputenc}

\begin{document}

\renewcommand{\theequation}{\thesection.\arabic{equation}}

\newtheorem{thm}{Theorem}[section]
\newtheorem{prp}[thm]{Proposition}
\newtheorem{lmm}[thm]{Lemma}
\newtheorem{cor}[thm]{Corollary}
\newtheorem{dfn}[thm]{Definition}
\newtheorem{rmk}[thm]{Remark}
\newtheorem{con}[thm]{Conjecture}
\newtheorem{Assum}[thm]{Assumption}
\newtheorem*{mth}{Main Theorem}

\newcommand{\vc}[1]{\mbox{\boldmath $#1$}}
\newcommand{\fracd}[2]{\frac{\displaystyle #1}
{\displaystyle #2}}

\def\ni{\noindent}
\def\nn{\nonumber}
\def\bH{\begin{Huge}}
\def\eH{\end{Huge}}
\def\bL{\begin{Large}}
\def\eL{\end{Large}}
\def\bl{\begin{large}}
\def\el{\end{large}}
\def\beq{\begin{eqnarray}}
\def\eeq{\end{eqnarray}}

\def\p{\partial}
\def\k{\kappa}

\def\C{{\cal C}}
\def\T{{\cal T}}
\def\H{{\cal H}}
\def\M{{\cal M}}
\def\L{{\cal L}}
\def\Q{{\cal Q}}
\def\S{{\cal S}}
\def\P{{\bf P}}
\def\im{{\rm Im}}
\def\re{{\rm Re}}

\def\i{{\rm i}}

\def\He{H\'enon }

\def\bn{\bigskip\noindent}
\def\pbn{\par\bigskip\noindent}
\def\n{\noindent}
\def\b{\bigskip}
\def\non{\nonumber}

\def\ve{\boldsymbol}

\newcommand{\bra}[1]{\langle{}#1{}|}
\newcommand{\eq}[1]{(\ref{#1})}


\title[Coupled Henon Map, Part I: Topological Horseshoes and Uniform Hyperbolicity]
{Coupled Henon Map, Part I: Topological Horseshoes and Uniform Hyperbolicity}

\author{Keisuke Fujioka$^{1}$, Ryota Kogawa$^{1}$, Jizhou Li$^{2}$ and Akira Shudo$^{1}$}
\address{$^{1}$ Department of Physics, Tokyo Metropolitan University, Tokyo 192-0397, Japan}

\address{$^{2}$ RIKEN iTHEMS, Wako, Saitama 351-0198, Japan}

\begin{abstract}
We derive a sufficient condition for topological horseshoe and uniform hyperbolicity 
of a 4-dimensional symplectic map, which is introduced by coupling 
the two 2-dimensional \He maps via linear terms. 
The coupled \He map thus constructed can be viewed as a simple map 
modeling the horseshoe in higher dimensions. 
We show that there are two different types of horseshoes, each of which 
is realized around different anti-integrable limits in the parameter regime. 

\end{abstract}
\maketitle

\color{black}

\section{Introduction}
\label{sec:introduction}

Horseshoe dynamics is known to be a source of chaos in dynamical systems. 
The most well-known and the simplest system modeling the horseshoe dynamics would be 
the H\'{e}non map~\cite{Henon69,Henon76}, which is a 2-dimensional quadratic map 
defined on ${\Bbb R}^2$. 
In the 2-dimensional plane, the horseshoe-shaped deformation is obtained 
by first stretching 
some initial domain in the unstable direction and then contracting it in the stable direction after folding back the stretched domain. 

Suppose that the horseshoe-shaped domain,  in both forward and backward iterations, 
intersects the original domain with two distinct regions, 
each of which is completely penetrated without lateral overhang. 
In this case, we say 
that the dynamics exhibits {\it topological horseshoe} \cite{Katok95,Wiggins88}. 
When the topological horseshoe is realized, 
the intersection of the iterated domain with the original domain, which generates 
the two disjoint strips in the case of a once-fold dynamics, is always mapped into the previous intersections, 
meaning that the width of each strip gradually decreases in time. 
Furthermore, if the contraction in the domain of interest is exponentially fast, each strip will eventually shrink to a string. 
If this is also the case in the backward iteration, the strings formed in the stretching and contracting directions iterates intersect to give a set of points. It then leads to the conjugation relation between the original and 
the properly introduced symbolic dynamics. 
The so-called Conley-Moser theory concerns a sufficient condition to have 
the symbolic dynamics based on topological horseshoe and {\it uniform hyperbolicity}~\cite{Wiggins03}. 

For the \He map, Devaney-Nitecki first developed such an argument and gave a sufficient 
condition such that the \He map exhibits topological horseshoe and uniform hyperbolicity as well~\cite{DN79}. 
Later, it was proved that the parameter locus satisfying uniform hyperbolicity
can be extended to the situation where the first homoclinic tangency happens using the complex 
dynamics technique~\cite{Bedford04} and computer-assisted proof~\cite{Arai07,Arai07_2}. 

There is another, even simpler approach to capturing the existence of chaos.
Suppose that the system has a certain parameter whose limiting value kills the dynamical relation between successive time steps, resulting in an infinite sequence of numbers or symbols. Such a limit is called the {\it anti-integrable limit}~\cite{Aubry90,Aubry94,Bolotin15}. 
Suppose there exists a suitable (discrete) Lagrangian. Then one can find a one-to-one correspondence between a sequence of numbers in the anti-integrable limit and an orbit generated by the actual dynamics whose parameter is close to the anti-integrable limit. 
The proof is based on the global implicit function theorem and 
the contraction mapping principle can be easily generalized to a wide class of systems. Moreover, since a close analogy exists between the orbits in dynamical systems and the equilibrium states of a class of variational problems in solid-state systems, one can relate the uniform hyperbolicity of the dynamics with the existence of phonon gap 
in the solid state problem~\cite{Aubry92}. 

The topic we would like to discuss in this article is the topological horseshoe and uniform hyperbolicity in 
higher dimensional symplectic maps. 
Among a variety of choices~\cite{Mao88,Howard87,Ding90,Bountis94,Todesco94,Vrahatis96,Todesco96,Gemmi97,Vrahatis97,giovannozzi98,Richter14,Lange14,Onken16,Lange16,Anastassiou17,Backer18,Backer20}, we here take the coupled \He map, which will be introduced below.
As in the case of 2-dimensional polynomial maps~\cite{Friedland89}, 
there is a derivation of normal forms of quadratic symplectic maps 
due to Moser~\cite{Moser94}, which provides a canonical model to
be studied in detail~\cite{Backer18,Backer20}. 
Indeed, it was shown in \cite{Backer20} that the normal form introduced by Moser 
can be decoupled into a pair of uncoupled quadratic maps under an appropriate choice of 
parameters, so our map should be a reduced version of the general normal form. 

An advantage of starting with the coupled \He map to examine 
topological horseshoes and uniform hyperbolicity would be that 
one can find anti-integrable limits in the parameter space rather easily.
As mentioned above, one would expect uniform hyperbolicity, and perhaps also 
topological horseshoe as well in the vicinity of anti-integrable limits~\cite{Qin01,Juang08,Chen15,Chen16,Hampton22}. 
There indeed exist some works in which topological horseshoe together with uniform hyperbolicity 
manifests in the region close to the anti-integrable limit~\cite{Du06,Li06}.

Here we provide a sufficient condition for topological horseshoe and uniform hyperbolicity for the coupled \He map, 
using essentially the same strategy as Devaney-Nitecki~\cite{DN79}. 
%
In particular, we study topological horseshoe and uniform hyperbolicity around the two different anti-integrable limit, 
each of which is derived by taking certain parameter limits in the coupled \He map. The first type can be shown to be conjugate with the symbolic dynamics with four symbols, while the second one is described by the full shift with two symbols. As will be briefly explained and thoroughly discussed in the following paper, their folding natures are different from each other. Especially the first type is so unique that it appears only in 4-dimensional space.

The structure of the paper is as follows:
Section \ref{sec:CH} introduces our coupled \He map, 
which is obtained by coupling a pair of 2-dimensional \He maps, 
and has three parameters: the two nonlinearity parameters and the coupling strength. 
Then we show that two anti-integrable limits exist in the current form of the coupled \He map. 
Section \ref{sec:Main_theorem} gives the main results of this paper, providing a sufficient condition 
for topological horseshoe and uniform hyperbolicity around each anti-integrable limit. 
Section \ref{sec:non-wandering_set} presents the existence domains 
in which the non-wandering set is contained. 
This part corresponds to the proof of the first part of the main theorems. 
Section \ref{sec:sufficient_condition_for_hyperbolicity} 
gives a sufficient condition for uniform hyperbolicity of the coupled \He map. 
To derive uniform hyperbolicity, we examine the cone field condition. 
In particular, we will use a sufficient condition for uniform hyperbolicity in higher dimensional 
settings, which have been introduced by Newhouse~\cite{Newhouse04}. 
Section \ref{sec:proof_Main_theorem} is devoted to proving the main theorems. 
Section \ref{sec:summary} summarizes the results and provides some outlooks.

\setcounter{equation}{0}
\section{Coupled H\'enon map and anti-integrable limits}
\label{sec:CH}

\subsection{Coupled H\'enon map}

The coupled H\'enon map is introduced as 
\begin{eqnarray}
\label{eq:coupled_Henon_map}
  \left(\begin{array}{c}
    x_{n+1}\\
    y_{n+1}\\
    z_{n+1}\\
    w_{n+1}
  \end{array}\right)
  =f  \left(\begin{array}{c}
    x_n\\
    y_n\\
    z_{n}\\
    w_{n}
    \end{array}\right)
  =  \left(\begin{array}{c}
    a_0-x_n^2-z_n+c(x_n-y_n)\\
    a_1-y_n^2-w_n-c(x_n-y_n)\\
    x_{n}\\
    y_{n}
  \end{array}\right) ,
\end{eqnarray}
where $c>0$ is assumed. 
The inverse map $f^{-1}$ is 
\begin{eqnarray}
  \left(\begin{array}{c}
    x_{n-1}\\
    y_{n-1}\\
    z_{n-1}\\
    w_{n-1}
  \end{array}\right)
  =f^{-1}\left(\begin{array}{c}
    x_n\\
    y_n\\
    z_n\\
    w_n
  \end{array}\right)
  =\left(\begin{array}{c}
    z_n\\
    w_n\\
    a_0-z_n^2-x_n+c(z_n-w_n)\\
    a_1-w_n^2-y_n-c(z_n-w_n)\\
  \end{array}\right) .
\end{eqnarray}
Here $a_0$ and $a_1$ are parameters that control the nonlinearity, and 
the parameter $c$ gives the coupling strength between the two H\'enon maps~\cite{DN79}. 
For $c> 0$, the replacement of the variables as $(x,y,z,w)\rightarrow (z,w,x,y)$ 
transforms the map $f$ into its inverse $f^{-1}$. 
\begin{eqnarray}
	\left(\begin{array}{c}
	X\\
	Y\\
	Z\\
	W
	\end{array}\right)
	=\frac{1}{2}\left(\begin{array}{c}
	x+y\\
	x-y\\
	z+w\\
	z-w\\
	\end{array}\right), 
	\label{eq:change_of_coordinate}
\end{eqnarray}
the form (\ref{eq:coupled_Henon_map}) can be written as
\begin{eqnarray}
\label{eq:map_tranformed}
	\left(\begin{array}{c}
    X_{n+1}\\
    Y_{n+1}\\
    Z_{n+1}\\
    W_{n+1}
  \end{array}\right)
  =F\left(\begin{array}{c}
    X_n\\
    Y_n\\
    Z_n\\
    W_n
  \end{array}\right)
  =\left(\begin{array}{c}
    A_0-(X_n^2+Y_n^2)-Z_n\\
    A_1-2X_nY_n-W_n+2cY_n\\
    X_n\\
    Y_n
  \end{array}\right). 
\end{eqnarray}
where $$\displaystyle A_0=\frac{a_0+a_1}{2}, ~~~~~A_1=\frac{a_0-a_1}{2}. $$
The inverse map $F^{-1}$ is also rewritten as 
\begin{eqnarray}
\label{eq:map_tranformed_inverse}
  \left(\begin{array}{c}
    X_{n-1}\\
    Y_{n-1}\\
    Z_{n-1}\\
    W_{n-1}
  \end{array}\right)
  =F^{-1}\left(\begin{array}{c}
    X_n\\
    Y_n\\
    Z_n\\
    W_n
  \end{array}\right)
  =\left(\begin{array}{c}
    Z_n\\
    W_n\\
    A_0-(Z_n^2+W_n^2)-X_n\\
   A_1-2Z_nW_n-Y_n+2cW_n\\
  \end{array}\right) .
\end{eqnarray}
%
%

\subsection{Anti-integrable limits for the coupled H\'enon map}

Here we show that two types anti-integrable limits exist in the coupled H\'enon map. 
For simplicity, we consider the case with $a=a_0=a_1$. 
Let us introduce new parameters $\epsilon=\sqrt{1/a}, u=\epsilon x$ and $v=\epsilon y$ and 
rewrite the coupled H\'enon map (\ref{eq:coupled_Henon_map}) as 
\begin{eqnarray}
	\left\{\begin{array}{c}
		\epsilon u_{n+1}=1-(u_{n})^2-\epsilon u_{n-1}+c\epsilon(u_n-v_n),\\
		\epsilon v_{n+1}=1-(v_{n})^2-\epsilon v_{n-1}-c\epsilon(u_n-v_n).
	\end{array}\right.\label{eq:AI}
\end{eqnarray}

\bn
(A) Anti-integrable limit with four symbols

The first type of anti-integrable limit is given by 
letting $a\rightarrow \infty$ with $c$ being fixed. 
In this anti-integrable limit, the coupling between two H\'enon maps can be neglected 
and (\ref{eq:AI}) tends to 
\begin{eqnarray}
	\left\{\begin{array}{c}
		0=1-(u_n)^2,\\
		0=1-(v_n)^2,
	\end{array}\right.
\end{eqnarray}
which lead to 
\begin{eqnarray}
	\left\{\begin{array}{c}
		u_n=\pm 1,\\
		v_n=\pm 1.
	\end{array}\right.
\end{eqnarray}
The four solutions $(u_n,v_n)=(+1,+1),(+1,-1),(-1,+1),(-1,-1)$ provide symbols of the 
symbolic dynamics around this anit-integrable limit.

\bn
(B) Anti-integrable limit with two symbols

The second type of anti-integrable limit is given by letting $a\rightarrow \infty$
with $c/\sqrt{a}=const=\gamma$ being fixed. In this limit, the two H\'enon maps 
are strongly coupled and the relations (\ref{eq:AI}) tend to 
\begin{eqnarray}
	\left\{\begin{array}{c}
		0=1-(u_n)^2+\gamma(u_n-v_n),\\
		0=1-(v_n)^2-\gamma(u_n-v_n),
	\end{array}\right.
\end{eqnarray}
which lead to the four solutions in the form
$(u_n,v_n)=(+1,+1),(\gamma-\sqrt{1-\gamma^2},\gamma+\sqrt{1-\gamma^2}),(\gamma+\sqrt{1-\gamma^2},\gamma-\sqrt{1-\gamma^2})$ and $(-1,-1)$. 
For $1\le |\gamma|$, the two solutions are complex, 
while for $1 > |\gamma|$ all the solutions are real. 

\section{Main theorems}
\label{sec:Main_theorem}

In this paper, we will give 
 sufficient conditions for topological horseshoe and 
uniform hyperbolicity around the anti-integrable limits (A) and (B), respectively. 

\begin{thm} 
\label{thm:MainA}
As for the anti-integrable limit of the case (A), the following holds. \\
	A-1) For $-1\le A_0$, the non-wandering set $\Omega(f)$ satisfies
	\begin{equation}
		\Omega(f)\subset  V_f, 
	\end{equation}
	where
	\begin{eqnarray}
		&V_f=\{(x,y,z,w)\,|\, |x|,|y|,|z|,|w|\le r\}. 
	\end{eqnarray}
	Here, $r=2\sqrt{2}(1+\sqrt{1+A_0})$. 
	\\
	A-2) If the parameters satisfy the following conditions, $f$ shows topological horseshoe. 
		  \begin{eqnarray}
		  \label{eq:typeA_sufficient1}
  			&0<\frac{1}{4}c^2+a_i-(c+2)r,\ \ (i=0,1), \\
		  \label{eq:typeA_sufficient2}
			&0\le r^2-2(c+1)r-a_i,\ \ (i=0,1). 
		\end{eqnarray}
	A-3) In addition to the conditions (\ref{eq:typeA_sufficient1}) and  (\ref{eq:typeA_sufficient2}), if the parameters satisfy the following condition, $\Omega(f)$ is uniformly hyperbolic. 
		  \begin{eqnarray}
			4+c<\frac{-c+\sqrt{c^2+4(a_i-(c+2)r)}}{2}, \ \ (i=0,1).
		\end{eqnarray}
	\label{the:CHA}
\end{thm}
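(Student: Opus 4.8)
The plan is to follow the Devaney--Nitecki strategy adapted to the 4-dimensional coupled map, treating the three assertions A-1), A-2), A-3) in turn. For A-1), I would show that any point whose full forward and backward orbit stays bounded must already lie in the box $V_f$. The standard device is to look at the ``largest coordinate'': given $(x,y,z,w)$, set $M=\max\{|x|,|y|,|z|,|w|\}$ and suppose $M>r$. Using the defining relations $z_{n+1}=x_n$, $w_{n+1}=y_n$ together with the first two component equations of (\ref{eq:coupled_Henon_map}), one estimates $|x_{n+1}|\ge x_n^2-|z_n|-|c||x_n-y_n|-|a_0|$ and similarly for $y_{n+1}$; with the choice $r=2\sqrt2(1+\sqrt{1+A_0})$ the quadratic term dominates once $M>r$, so the maximum coordinate grows without bound under iteration (forward or backward depending on which coordinate realizes the max). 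Hence such a point is wandering, giving $\Omega(f)\subset V_f$. The bookkeeping of which of $x,y,z,w$ attains the max — and hence whether to push forward or backward — is the only mildly delicate point here.

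For A-2), the topological horseshoe, I would work inside $V_f$ and exhibit the ``horizontal'' and ``vertical'' slab structure. The natural splitting uses $(z,w)$ as the ``base'' coordinates and $(x,y)$ as the ``fiber'' coordinates (this is forced by $z_{n+1}=x_n,\ w_{n+1}=y_n$). Because the anti-integrable limit (A) decouples the two H\'enon factors, the symbol set is the product $\{+,-\}\times\{+,-\}$, i.e. four symbols; the four ``horizontal slabs'' correspond to the two branches of $x_n^2\approx a_0$ times the two branches of $y_n^2\approx a_1$. Concretely I would solve $f$ for the preimage slabs: fixing $(x_{n+1},y_{n+1},z_{n+1},w_{n+1})\in V_f$, the equation $a_0-x_n^2-z_n+c(x_n-y_n)=x_{n+1}$ together with $z_n=z_n$, $w_n=w_n$ free, defines over each choice of sign of the root a graph; conditions (\ref{eq:typeA_sufficient1}) and (\ref{eq:typeA_sufficient2}) are exactly what is needed so that (i) the discriminant of the relevant quadratic stays positive on $V_f$ (so the two sign-branches genuinely exist and stay separated — this is where $\frac14 c^2+a_i-(c+2)r>0$ enters), and (ii) each branch stretches fully across $V_f$ in the $(x,y)$-direction without lateral overhang, i.e. maps out of the box in the complementary directions (this is where $r^2-2(c+1)r-a_i\ge 0$ enters). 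Verifying that the image slabs cross the preimage slabs correctly — the ``full penetration without overhang'' condition of the Conley--Moser/topological-horseshoe setup — is the combinatorial heart of this part, and the estimates must be made uniform over $V_f$.

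For A-3), uniform hyperbolicity, I would invoke the Newhouse cone-field criterion cited in the excerpt (Section~\ref{sec:sufficient_condition_for_hyperbolicity}, \cite{Newhouse04}): it suffices to produce a $Df$-invariant splitting into an expanding cone and a contracting cone at every point of $\Omega(f)\subset V_f$. The Jacobian of $f$ has the block form with the $2\times2$ block $\begin{pmatrix} -2x+c & -c \\ -c & -2y-c\end{pmatrix}$ acting on the ``new'' coordinates and the identity relating the old ones, so one sets up a quadratic cone of the form $\|(\delta z,\delta w)\|\le \|(\delta x,\delta y)\|$ (an ``unstable cone'' in the fiber directions) and checks that $Df$ maps it strictly inside itself with expansion, while $Df^{-1}$ does the symmetric thing on the complementary cone. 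The eigenvalues of the symmetric $2\times2$ block are controlled from below on $V_f$ by $\tfrac12\bigl(-c+\sqrt{c^2+4(a_i-(c+2)r)}\bigr)$ after bounding $|x|,|y|\le r$ and using (\ref{eq:typeA_sufficient1}); the extra hypothesis $4+c<\tfrac12\bigl(-c+\sqrt{c^2+4(a_i-(c+2)r)}\bigr)$ says precisely that this guaranteed expansion beats the ``worst-case'' cone distortion coming from the off-diagonal $-c$ and the coupling, so the cone is strictly invariant and the expansion factor exceeds $1$ uniformly. Assembling these — invariance of the cone field plus a uniform expansion/contraction bound — then yields uniform hyperbolicity of $\Omega(f)$ by Newhouse's theorem.

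The step I expect to be the main obstacle is A-2): getting the slab geometry exactly right in four dimensions — in particular checking that each of the four preimage slabs is a full-width graph over the $(Z,W)$ (equivalently $(z,w)$) base and that the forward images cross them transversally with no lateral overhang — requires carefully choosing the correct ``horizontal vs. vertical'' product structure and then making every inequality uniform over the whole box $V_f$ rather than just near the anti-integrable limit. The hyperbolicity step A-3) is then comparatively mechanical once the $2\times2$ Jacobian block is diagonalized and the cone is chosen to respect its eigenvectors.
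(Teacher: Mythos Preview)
Your overall Devaney--Nitecki strategy is on target, but there are two substantive gaps.

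\textbf{A-1).} Your largest-coordinate escape argument in the original $(x,y,z,w)$ coordinates yields a trapping box whose size depends on $c$: the estimate $|x_{n+1}|\ge x_n^2-|z_n|-c|x_n-y_n|-|a_0|$ forces an escape threshold of order $(1+c)+\sqrt{(1+c)^2+|a_i|}$, whereas the claimed $r=2\sqrt2(1+\sqrt{1+A_0})$ is \emph{independent of $c$}. The paper obtains this $c$-free bound by first passing to the rotated coordinates $X=(x+y)/2,\ Y=(x-y)/2,\ Z=(z+w)/2,\ W=(z-w)/2$, in which the $X$- and $Z$-components of the transformed map $F$ do not involve $c$ at all. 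One then decomposes the $(X,Z)$-plane into four regions $N_1,\dots,N_4$, shows via monotonicity lemmas and transition rules that the non-wandering set must lie in $|X|,|Z|\le R=1+\sqrt{1+A_0}$, and afterwards extracts $|Y|,|W|\le R$; undoing the coordinate change gives $r=2\sqrt2\,R$. Your direct approach does not reach the stated $c$-independent box.

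\textbf{A-3).} The direction of your Jacobian estimate is inverted. The block $G(x,y)$ with rows $(-2x+c,\,-c)$ and $(-c,\,-2y+c)$ has eigenvalue $0$ at $x=y=0\in V_f$, so an \emph{upper} bound $|x|,|y|\le r$ cannot possibly control expansion from below on all of $V_f$. What is actually required is a \emph{lower} bound on the coordinates: the paper shows that $|x|,|y|\ge 2\lambda+2+c$ implies $|G(x,y)\bm v|\ge(2\lambda+2)|\bm v|$, and hence the cone condition holds once every non-wandering point satisfies $|x|,|y|,|z|,|w|\ge 4+c$. The quantity $\tfrac12\bigl(-c+\sqrt{c^2+4(a_i-(c+2)r)}\bigr)$ appearing in A-3) is not an eigenvalue bound at all --- it is $|z^*_-|$, the smallest value of $|z|$ attained on the horseshoe intersection $f(V_f)\cap V_f$ (namely the $z$-coordinate where the inner parabola $\Gamma_x^{\min}$ meets the face $x=r$). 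The correct logical flow is: the horseshoe geometry of A-2) forces $\Omega(f)\subset f(V_f)\cap V_f\cap f^{-1}(V_f)$ to sit in the region where all four coordinates satisfy $|\,\cdot\,|\ge |z^*_-|$; hypothesis A-3) then says $|z^*_-|>4+c$, and it is this coordinate lower bound that drives the cone-field expansion. Your proposal collapses these two distinct steps, and as written the cone argument would fail at interior points of $V_f$.

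Your A-2) sketch is broadly compatible with the paper's argument, which parametrizes $f(V_f)$ so as to decouple the $(x,z)$- and $(y,w)$-planes and then checks the two parabola-versus-face crossing conditions in each plane separately.
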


\color{black}

\begin{thm}
\label{thm:MainB}
As for the anti-integrable limit of the case (B), the following holds.\\
	B-1) For $-1\le A_0$, the non-wandering set $\Omega(f)$ satisfies
	\begin{equation}
		\Omega(f)\subset V_F, 
	\end{equation}
	where
	\begin{eqnarray}
	\label{eq:def_V_F}
		&V_F=\{(x,y,z,w)\,|\, \left|\frac{x+y}{2}\right|,\left|\frac{x-y}{2}\right|,\left|\frac{z+w}{2}\right|,\left|\frac{z-w}{2}\right|\le R\}, 
	\end{eqnarray}
	where $\displaystyle R=1+\sqrt{1+A_0}$.  
	\\
	B-2) If the parameters satisfy the following conditions, $f$ shows topological horseshoe. 
		\begin{eqnarray}
		  \label{eq:typeB_sufficient1}
			&A_1 \le R<c,\\
		  \label{eq:typeB_sufficient2}
			&R<A_0-(W^*)^2-R,\\
		  \label{eq:typeB_sufficient3}
			&W^*\le R. 
		\end{eqnarray}
		Here, $W^*=\max\Bigl(\displaystyle \Bigl|\frac{2R-A_1}{2(c-R)}\Bigr|$, $\Bigl|\displaystyle\frac{-2R-A_1}{2(c-R)}\Bigr|\Bigr)$, and $Z^*=\sqrt{A_0-(W^*)^2-2R}$.  \\
	B-3) In addition to the conditions (\ref{eq:typeB_sufficient1}), (\ref{eq:typeB_sufficient2}) and  (\ref{eq:typeB_sufficient3}), if the parameters satisfy the following condition, $\Omega(f)$ is uniformly hyperbolic. 
		\begin{eqnarray}
\label{the:CHBe}
			4+c\le Z^*-W^*.
		\end{eqnarray}
	\label{the:CHB}
\end{thm}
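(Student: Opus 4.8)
The plan is to prove the three parts in sequence, following the Devaney--Nitecki strategy adapted to the transformed map $F$ in the strongly coupled coordinates $(X,Y,Z,W)$, where the second component carries the decisive coupling term $2cY$. For part B-1, I would establish that orbits which ever leave the box $V_F$ escape to infinity. Working with $F$, I would look at the quantity $\max(|X_n|,|Y_n|)$ (or suitable combinations such as $|X_n+Y_n|$ and $|X_n-Y_n|$, matching the definition \eqref{eq:def_V_F}) and show that once it exceeds $R=1+\sqrt{1+A_0}$, the recursion $Z_{n+1}=X_n$ together with the first row $X_{n+1}=A_0-(X_n^2+Y_n^2)-Z_n$ forces the next value to be strictly larger in modulus, so the orbit is unbounded in both forward and backward time; hence it cannot lie in the non-wandering set. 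The choice of $R$ as the larger root of $R=A_0-R^2+R$-type relations (i.e. $R^2-R-A_0\le 0$ flipped appropriately) is exactly what makes the escape estimate close.

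For part B-2, the goal is to exhibit the topological horseshoe structure: I would take the box $V_F$, intersect it with $F(V_F)$ and $F^{-1}(V_F)$, and show that the intersection consists of the required number of ``full-length'' slabs stretching across $V_F$ without lateral overhang. Condition \eqref{eq:typeB_sufficient1}, $A_1\le R<c$, ensures the $W$-equation $W_{n+1}=A_1-2X_nY_n-W_n+2cY_n$ is dominated by the $2cY_n$ term, so the $W$-coordinate is controlled with the right sign; condition \eqref{eq:typeB_sufficient3}, $W^*\le R$, keeps the relevant preimage inside the box; and condition \eqref{eq:typeB_sufficient2}, $R<A_0-(W^*)^2-R$, guarantees that $Z^*=\sqrt{A_0-(W^*)^2-2R}$ is real and that the $X$-equation genuinely folds the box onto two components (the two preimages under $X\mapsto A_0-X^2-\cdots$ stay separated and span $V_F$). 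I would verify the Conley--Moser ``correctly aligned'' conditions by checking boundary behavior on the faces $X=\pm R$, $Y=\pm R$, etc., reducing each to one of the three displayed inequalities.

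For part B-3 (uniform hyperbolicity), I would invoke Newhouse's cone-field criterion in the higher-dimensional setting cited in Section \ref{sec:sufficient_condition_for_hyperbolicity}: exhibit on $V_F$ an invariant cone field $C^u$ with $DF(C^u)\subset \mathrm{int}\,C^u$ and $DF^{-1}(C^s)\subset \mathrm{int}\,C^s$, together with strict expansion/contraction estimates. The derivative $DF$ has the block structure coming from \eqref{eq:map_tranformed}: the only nonzero entries beyond the shift blocks are $-2X$, $-2Y$, $-1$ in the first row and $-2Y$, $-(2X-2c)$, $-1$ in the second. On $V_F$ we have $|X|,|Y|\le R$, so the problematic coefficient is $|2X-2c|$, which is bounded below by $2(c-R)>0$ by \eqref{eq:typeB_sufficient1}; the expansion rate available is of order $Z^*-W^*$ on the relevant invariant set, and condition \eqref{the:CHBe}, $4+c\le Z^*-W^*$, is precisely the threshold making the cone contraction strict (the $4+c$ accounting for the worst-case off-diagonal contributions $2|X|+2|Y|+1\le 4R+1$ together with the coupling). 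I would then conclude hyperbolicity of $\Omega(f)$ on the box, transferring back from $F$ to $f$ via the linear conjugacy \eqref{eq:change_of_coordinate}.

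The main obstacle I anticipate is part B-3: unlike the planar \He case, the cone field is genuinely four-dimensional, the unstable cone is two-dimensional, and one must track how $DF$ mixes the two ``\He blocks'' through the coupling term. Getting the quantitative cone-opening estimate sharp enough that the clean inequality $4+c\le Z^*-W^*$ suffices — rather than some messier sufficient condition — will require a careful choice of the cone (likely adapted to the $(X,Y)$ vs. $(Z,W)$ splitting) and a delicate bookkeeping of the $2cY$ term; this is where the bulk of the technical work in Section \ref{sec:proof_Main_theorem} will lie. Parts B-1 and B-2 I expect to be routine once the correct box $V_F$ and the auxiliary quantities $W^*$, $Z^*$ are in hand.
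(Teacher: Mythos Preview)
Your overall strategy matches the paper's, and parts B-1 and B-2 are essentially right, though for B-1 the paper takes a slightly different route: rather than a direct escape estimate on $\max(|X_n|,|Y_n|)$, it decomposes the $(X,Z)$-plane into four regions $N_1,\dots,N_4$ and proves a transition diagram (Proposition~\ref{pro:NonWand1}) forcing $\Omega(F)\subset N_2$, then trims $N_2$ down to $V_F$ using one forward and one backward iterate. Your escape argument would likely work too and is closer in spirit to the original Devaney--Nitecki presentation.

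For B-3 there is a conceptual slip worth flagging. You write that on $V_F$ one has $|X|,|Y|\le R$ and that the threshold $4+c$ ``accounts for the worst-case off-diagonal contributions $2|X|+2|Y|+1\le 4R+1$''. This is the wrong direction: the cone estimate does \emph{not} use the upper bound $|X|\le R$; it uses a \emph{lower} bound $|X|\ge Z^*$ (together with $|Y|\le W^*$), and these come not from $V_F$ itself but from the horseshoe geometry of B-2 applied to $F(V_F)\cap V_F\cap F^{-1}(V_F)$. Concretely, the paper shows (Lemma~\ref{lem:CHHyper2}) that the $2\times 2$ block $\widetilde G(X,Y)$ of \eqref{eq:CHHyper2G} expands every vector by a factor at least $|X|-|Y|-c$; the block-matrix cone argument (Theorem~\ref{the:CHHyper}) then converts ``$\widetilde G$ expands by $2\lambda+2$'' into ``$DF$ expands the standard unit cone by $\lambda$'', so hyperbolicity requires $|X|-|Y|-c>4$, i.e.\ $|X|-|Y|>4+c$. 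Condition \eqref{the:CHBe} is precisely $|Z|-|W|\ge Z^*-W^*\ge 4+c$ on the invariant set (and symmetrically for $|X|-|Y|$ via $F^{-1}$). So the $4$ comes from the block structure ($2\lambda+2$ with $\lambda>1$), not from $4R$, and the $c$ is the contribution of the single coupling entry $2c$ in $\widetilde G$. Once you see this, B-3 is actually the most mechanical of the three parts --- a short case analysis on $\widetilde G$ plus the observation that $|Z|\ge Z^*$ and $|W|\le W^*$ on $F(V_F)\cap V_F$ --- rather than the delicate cone-opening computation you anticipate.
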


\color{black}

\setcounter{equation}{0}
\section{Non-wandering set}
\label{sec:non-wandering_set}

\subsection{Some lemmas}
\label{sec:lemmas}
To prove topological horseshoe and uniformly hyperbolicity 
for the coupled H\'enon map
we take a similar strategy similar to Devaney-Nitecki~\cite{DN79}. 
In the following, we prove some lemmas using 
the parameter: 
\begin{eqnarray}
  R&=&1+\sqrt{1+A_0} \in {\Bbb R}. 
\end{eqnarray}
\begin{lmm}
$R$ satisfies the following
  \begin{eqnarray}
    R^2-2R-A_0=0. 
  \end{eqnarray}
  \label{lem:BasicEq1}
\end{lmm}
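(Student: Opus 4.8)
The plan is to verify the identity by direct substitution, since $R=1+\sqrt{1+A_0}$ is precisely the larger root of the quadratic polynomial $t^2-2t-A_0$. First I would note that under the standing hypothesis $-1\le A_0$ the discriminant $4+4A_0$ of this quadratic is nonnegative, so $\sqrt{1+A_0}$ is a well-defined real number and hence $R\in{\Bbb R}$, as claimed. Then I would simply compute
\begin{eqnarray}
R^2-2R-A_0
&=&\bigl(1+\sqrt{1+A_0}\bigr)^2-2\bigl(1+\sqrt{1+A_0}\bigr)-A_0 \nonumber\\
&=&1+2\sqrt{1+A_0}+(1+A_0)-2-2\sqrt{1+A_0}-A_0 \nonumber\\
&=&0, \nonumber
\end{eqnarray}
where the cross terms $\pm 2\sqrt{1+A_0}$ cancel and the constants $1+1+A_0-2-A_0$ cancel as well.

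There is no genuine obstacle here: the statement is an elementary algebraic fact recorded for later reference (it is the analogue of the quadratic identity $r^2-2r-a=0$ used by Devaney–Nitecki, and will be invoked repeatedly when estimating the images of the box $V_F$). The only point worth a remark is the choice of root: the quadratic $t^2-2t-A_0=0$ has the two solutions $1\pm\sqrt{1+A_0}$, and $R$ is taken to be the nonnegative (in fact $\ge 1$) one, which is the relevant bound for the coordinates in the definition of $V_F$ in \eqref{eq:def_V_F}. An equivalent, perhaps cleaner, way to present the proof is to observe that $R-1=\sqrt{1+A_0}$, square both sides to get $R^2-2R+1=1+A_0$, and rearrange.
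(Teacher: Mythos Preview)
Your proof is correct and is essentially the same as the paper's: the paper simply records ``Self-evident'', and your direct substitution is exactly the verification that makes this self-evident.
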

\begin{proof}
  Self-evident. 
\end{proof}
\begin{lmm}
a) For any $C\ge 0$, if $|Z_0|\le C$ is satisfied, the following holds:
\begin{eqnarray}\label{lem:BasicInEQ}
  A_0-(Z_1^2+W_1^2)-C\le X_1 \le A_0-(Z_1^2+W_1^2)+C. 
  \end{eqnarray}
In addition,  if $|X_0|\le C$, then $|Z_1|\le C$ holds. \\
b) For any $C\ge 0$, if $|X_0|\le C$ is satisfied, the following holds:
\begin{eqnarray}
  A_0-(X_{-1}^2+Y_{-1}^2)-C\le Z_{-1} \le A_0-(X_{-1}^2+Y_{-1}^2)+C. 
\end{eqnarray}
In addition, if $|Z_0|\le C$, then $|X_{-1}|\le C$ holds.
    \label{lem:horseshoe}
\end{lmm}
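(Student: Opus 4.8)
The plan is to read off everything directly from the component equations of the transformed maps $F$ and $F^{-1}$. Recall from (\ref{eq:map_tranformed_inverse}) that the third component of $F^{-1}$ reads $Z_{n-1} = A_0 - (Z_n^2 + W_n^2) - X_n$; equivalently, shifting the index by one, the third component of $F$ gives $Z_n = X_{n-1}$ and the first component gives $X_n = A_0 - (X_{n-1}^2 + Y_{n-1}^2) - Z_{n-1}$. For part a), I would apply the relation $X_1 = A_0 - (X_0^2 + Y_0^2) - Z_0$ — wait, more carefully: from (\ref{eq:map_tranformed}), $X_1 = A_0 - (X_0^2+Y_0^2) - Z_0$, but the inequality is stated in terms of $Z_1, W_1$. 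The cleaner route is to use the inverse relation at time $1$: from (\ref{eq:map_tranformed_inverse}) with $n=1$, $Z_0 = A_0 - (Z_1^2 + W_1^2) - X_1$, i.e. $X_1 = A_0 - (Z_1^2 + W_1^2) - Z_0$. Then the hypothesis $|Z_0| \le C$ gives immediately $A_0 - (Z_1^2+W_1^2) - C \le X_1 \le A_0 - (Z_1^2+W_1^2) + C$, which is (\ref{lem:BasicInEQ}).

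For the second assertion of part a), note that the third component of $F$ is simply $Z_1 = X_0$, so $|X_0| \le C$ forces $|Z_1| \le C$ trivially. Part b) is the mirror image under the time-reversal symmetry $(x,y,z,w) \mapsto (z,w,x,y)$ noted after (\ref{eq:map_tranformed_inverse}): the first component of $F$ with $n=-1$ gives $X_0 = A_0 - (X_{-1}^2 + Y_{-1}^2) - Z_{-1}$, hence $Z_{-1} = A_0 - (X_{-1}^2+Y_{-1}^2) - X_0$, and $|X_0| \le C$ yields the desired two-sided bound; while the first component of $F^{-1}$ gives $X_{-1} = Z_0$, so $|Z_0| \le C$ implies $|X_{-1}| \le C$. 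In each case the argument is a one-line substitution into the explicit formula followed by the triangle inequality, so I do not anticipate any genuine obstacle — the only thing to be careful about is bookkeeping the index shifts and making sure the $(X,Y,Z,W)$ coordinates (rather than the original $(x,y,z,w)$) are used consistently, since Lemma \ref{lem:BasicEq1} and this lemma are phrased in the transformed variables. The content is entirely algebraic and follows from the definitions of $F$ and $F^{-1}$.
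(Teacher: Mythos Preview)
Your proposal is correct and is exactly the intended argument: the paper's own proof is simply ``It is easy to check both of them,'' and what you have written is precisely the routine substitution into the component formulas of $F$ and $F^{-1}$ that this sentence abbreviates. Your bookkeeping of the index shifts and the use of $Z_1=X_0$, $X_{-1}=Z_0$ is accurate.
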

\begin{proof}
It is easy to check both of them. 
\end{proof}
\begin{lmm}
  a) If $X_0\le \min(-|Z_0|,-R)$, then $X_{1}\le X_{0}$ follows. The equality holds when 
  $(X_0,Y_0,Z_0)=(-R,0,-R)$. \\
  b) If $-|Z_{0}| \le X_0$ and $Z_{0}\le -R$ hold, then $Z_{-1}\le Z_{0}$ 
  and
  $|Z_{0}|\le |Z_{-1}|$ follows.
  The equalities hold when $(X_0,Z_0,W_0)=(-R,-R,0)$. 
  \label{lem:Diff1}
\end{lmm}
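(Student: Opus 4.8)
The plan is to prove both parts by a direct computation with $F$ (resp.\ $F^{-1}$), reducing each desired inequality to a one-variable estimate that is closed by completing the square and invoking Lemma~\ref{lem:BasicEq1} (which records $A_0 = R^2-2R$).

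For part a), I would start from $X_1 - X_0 = A_0 - X_0^2 - Y_0^2 - Z_0 - X_0$, read off from~(\ref{eq:map_tranformed}). The hypothesis $X_0 \le -|Z_0|$ gives $-Z_0 \le |Z_0| \le -X_0$, so replacing $-Z_0$ by the larger quantity $-X_0$ and discarding $-Y_0^2 \le 0$ yields $X_1 - X_0 \le A_0 - X_0^2 - 2X_0$. By Lemma~\ref{lem:BasicEq1} this equals $(R-1)^2 - (X_0+1)^2$; since $R \ge 1$ (because $-1 \le A_0$) and $X_0 \le -R$, one has $X_0+1 \le -(R-1) \le 0$, hence $(X_0+1)^2 \ge (R-1)^2$ and the whole expression is $\le 0$. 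For the equality clause I would trace back when each of the three inequalities is tight: equality forces $Z_0 = X_0$, then $Y_0 = 0$, then $(X_0+1)^2 = (R-1)^2$ together with $X_0 \le -R$ forces $X_0 = -R$, i.e.\ exactly $(X_0,Y_0,Z_0) = (-R,0,-R)$; a one-line check confirms $X_1 = X_0$ there.

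For part b), the same scheme applied to the inverse map~(\ref{eq:map_tranformed_inverse}) gives $Z_{-1} - Z_0 = A_0 - Z_0^2 - W_0^2 - X_0 - Z_0$. Now $Z_0 \le -R < 0$ means $|Z_0| = -Z_0$, and the hypothesis $-|Z_0| \le X_0$ rearranges to $-X_0 \le -Z_0$; replacing $-X_0$ by $-Z_0$ and dropping $-W_0^2$ gives $Z_{-1} - Z_0 \le A_0 - Z_0^2 - 2Z_0 = (R-1)^2 - (Z_0+1)^2 \le 0$ exactly as before, so $Z_{-1} \le Z_0$. The second assertion $|Z_0| \le |Z_{-1}|$ is then immediate: from $Z_{-1} \le Z_0 \le -R < 0$ we get $|Z_{-1}| = -Z_{-1} \ge -Z_0 = |Z_0|$, with equality precisely when $Z_{-1} = Z_0$. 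The equality bookkeeping mirrors part a) and yields $(X_0,Z_0,W_0) = (-R,-R,0)$.

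There is no substantive obstacle here --- the statement is essentially a sign-chasing exercise --- so the only thing requiring care is the equality analysis: one must check that the individual equality conditions ($Z_0 = X_0$, the vanishing of $Y_0$ or $W_0$, and $X_0 = -R$ or $Z_0 = -R$) are mutually consistent and reproduce exactly the claimed points, and that the bound $|Z_0| \le -X_0$ used en route is genuinely tight there rather than a byproduct of over-estimating.
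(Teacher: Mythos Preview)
Your proposal is correct and is essentially the paper's own argument: both proofs compute $X_1-X_0$ (resp.\ $Z_{-1}-Z_0$) from the map, drop the $-Y_0^2$ (resp.\ $-W_0^2$) term, use the hypothesis to replace $-Z_0$ by $-X_0$ (resp.\ $-X_0$ by $-Z_0$), and finish via completing the square and Lemma~\ref{lem:BasicEq1}. Your factorization $(R-1)^2-(X_0+1)^2$ is just a rewriting of the paper's $A_0-(X_0+1)^2+1$, and your equality trace is slightly more detailed than the paper's (which simply asserts the equality case), but otherwise the two arguments coincide.
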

\begin{proof}
\begin{figure}
        \centering
        \includegraphics[width=0.3\linewidth]{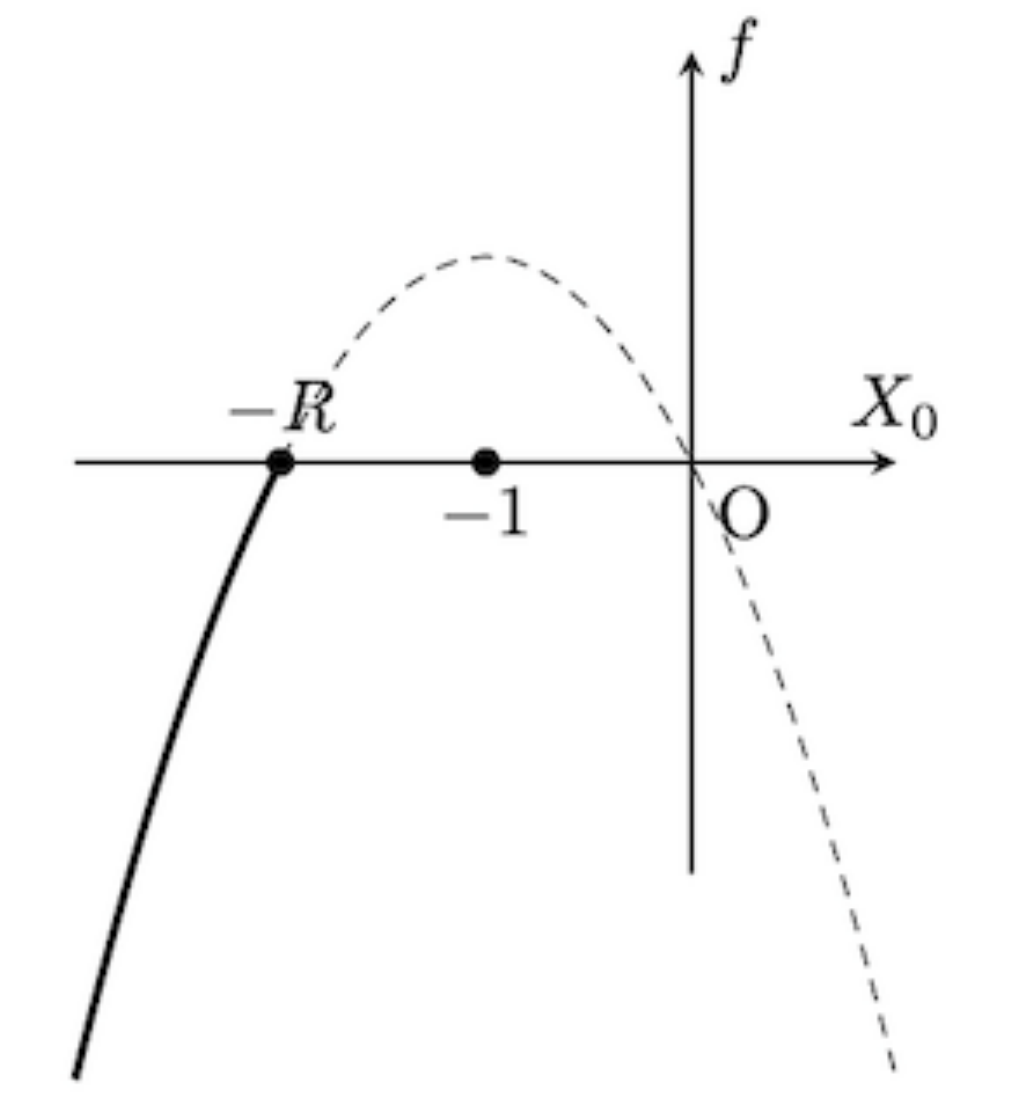} 
        \caption{Sketch of $f=A_0-(X_0)^2-2X_0$.} \label{fig:DiffEq1.1}
\end{figure}
a) 
If $X_{0}\le \min(-|Z_{0}|,-R)$ holds, we find that 
  \begin{eqnarray}
    X_1-X_0&=& A_0-(X_0^2+Y_0^2)-Z_0-X_0\nonumber\\
    &\le& A_0-X_0^2-Z_0-X_0\nonumber\\
    &\le& A_0-X_0^2+|Z_0|-X_0\nonumber\\
    &\le& A_0-X_0^2-2X_0\nonumber\\
    &=&A_0-(X_0+1)^2+1. \label{eq:DiffIneq1.1}
  \end{eqnarray}
  Since $X_0\le -R$, we have a condition for $X_0$ as
  \begin{eqnarray}
    X_0\le -R= -1-\sqrt{1+A_0}\le -1. 
  \end{eqnarray}
  Then, $A_0-(X_0+1)^2+1$ takes the maximum value at $X_0=-R$
  (see Fig.~\ref{fig:DiffEq1.1}).  Thus, we have 
  \begin{eqnarray}
    X_1-X_0\le A_0-(-R)^2-2(-R)=0.
  \end{eqnarray}
  Here we have used lemma \ref{lem:BasicEq1}. The equality holds when $(X_0,Y_0,Z_0)=(-R,0,-R)$ is satisfied. 

b) Assuming $-|Z_{0}| \le X_0$ and $Z_0 \le -R$, 
  we find that 
  \begin{eqnarray}
    Z_{-1}-Z_0&=&A_0-(Z_0^2+W_0^2)-X_0-Z_0\nonumber\\
    &\le&A_0-Z_0^2-X_0-Z_0\nonumber\\
    &\le&A_0-Z_0^2+|Z_0|-Z_0\nonumber\\
    &=&A_0-Z_0^2-2Z_0\nonumber\\
    &=&A_0-(Z_0+1)^2+1. 
    \label{eq:DiffEq1.2}
  \end{eqnarray}
  In the same way as above,  since 
  \begin{eqnarray}
    Z_0\le -R\le -1 
  \end{eqnarray}
 holds,  $A_0-(Z_0+1)^2+1$ takes the maximum value at $Z_0=-R$ 
 (see Fig.~\ref{fig:DiffEq1.2}). 
Hence, we have 
  \begin{eqnarray}
   Z_{-1}-Z_0\le A_0-(-R)^2-2(-R)=0. \label{eq:DiffEq1.2con}
  \end{eqnarray}
Since $Z_0 \leq -R$, $|Z_{-1}|\ge |Z_0|$ also follows.
The equality holds when $(X_0,Z_0,W_0)=(-R,-R,0)$ holds. 
  \begin{figure}
    \centering
	\includegraphics[width=0.3\linewidth]{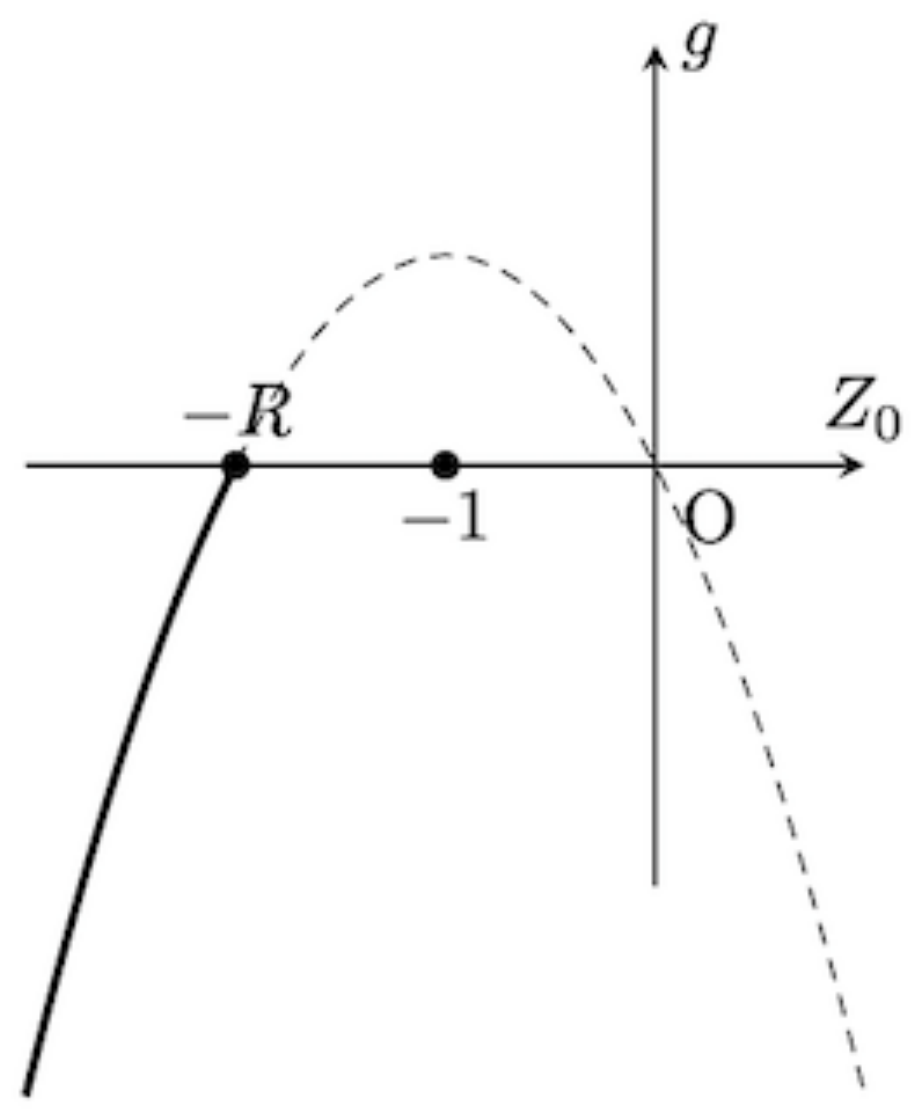}
    \caption{Sketch of $g=A_0-(Z_0)^2-2Z_0$.}
    \label{fig:DiffEq1.2}
  \end{figure}
\end{proof}


\subsection{Decomposition of domains and transition rules}

In the following, we study the coupled H\'enon map in the case 
where $R$ takes a real value. 
For this purpose, we introduce the following domains (see Fig.~\ref{fig:Domain1}):
\begin{eqnarray}
  N_1&=&\{(X,Y,Z,W)\, |\, X\le \min(-|Z|,-R)\}, \\
  N_2&=&\{(X,Y,Z,W)\,|\,X\ge -R,\ |Z|\le R\}, \\
  N_3&=&\{(X,Y,Z,W)\,|\,X\ge -|Z|,\ Z\ge R\}, \\
  N_4&=&\{(X,Y,Z,W)\,|\,X\ge -|Z|,\ Z\le -R\}, \\
\end{eqnarray}
\begin{figure}
  \centering
	\includegraphics[height=6cm, bb = 0 0 345 322]{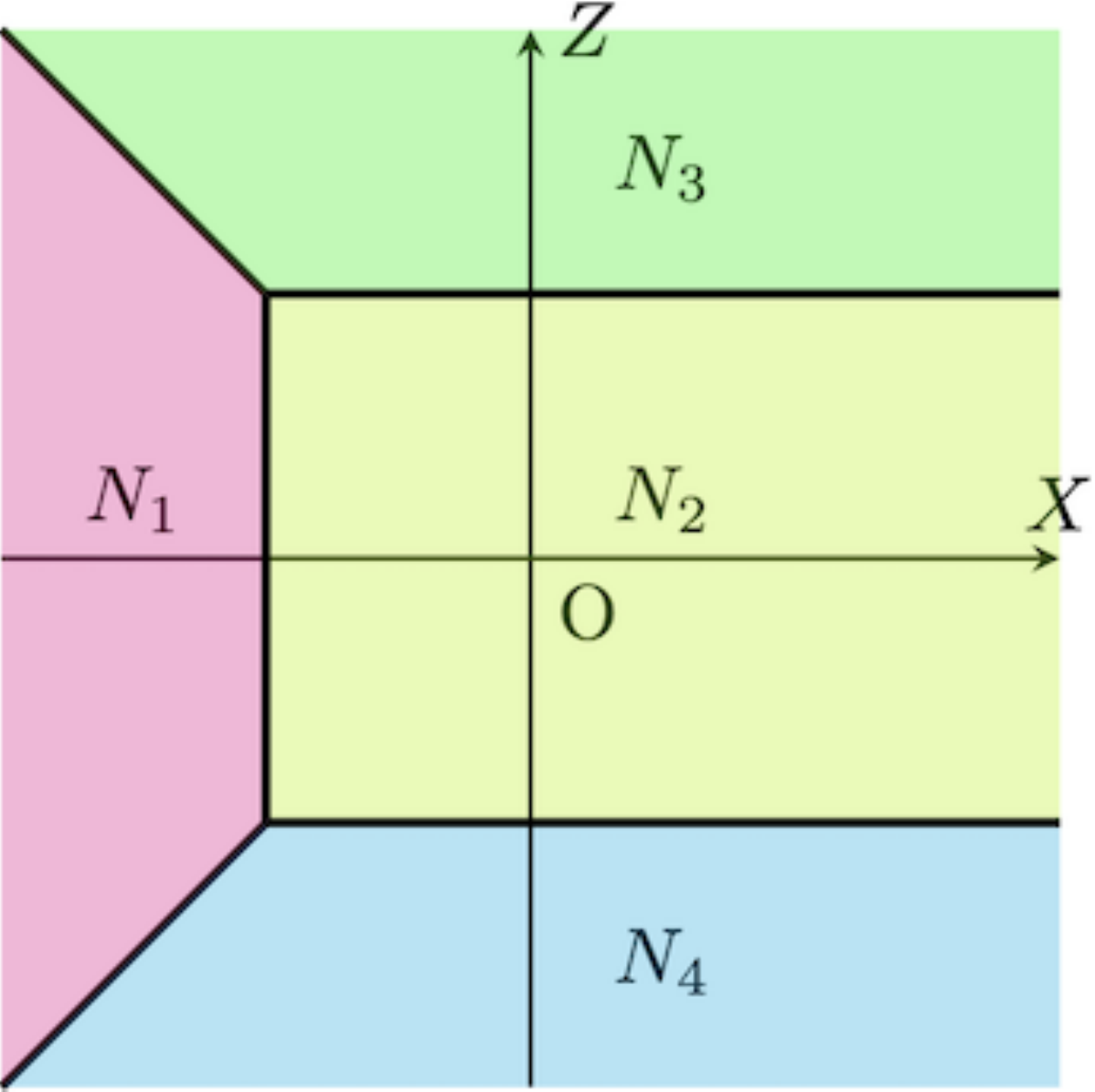}
  \caption{Illustration of domains and their boundary lines
  in the $(X,Z)$-plane. }
  \label{fig:Domain1}
\end{figure}

\begin{prp}
  If  $A_0 \ge -1$, the following holds: \\
    a) Under the iteration of $F$, the coordinate $X$ strictly decreases in $N_1$ except for $(X,Z)=(-R,-R)$. \\
  b) $F(N_1)\subset N_1$.\\
  c) $F(N_2)\subset N_1\cup N_2$ and $F(N_3)\subset N_1\cup N_2$.\\
  d) Under the iteration of $F^{-1}$, the coordinate $Z$ strictly decreases in $N_4$ except for $(X,Z)=(-R,-R)$. \\
  e) $F^{-1}(N_3)\subset N_4$ and $F^{-1}(N_4)\subset N_4$. \\
\color{black}
  f) $F^{-1}(N_2) \subset N_2\cup N_3\cup N_4$.\\
\color{black}
  \label{pro:NonWand1}
\end{prp}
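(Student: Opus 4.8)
\textbf{Proof plan for Proposition \ref{pro:NonWand1}.}
The plan is to establish all seven assertions by reducing each to the one-dimensional estimates already recorded in Lemmas \ref{lem:BasicEq1}--\ref{lem:Diff1}, exploiting the fact that the first and third components of $F$ (resp.\ the second and fourth of $F^{-1}$) decouple from $Y,W$ in a controllable way. The key structural observation is that in coordinates $(X,Y,Z,W)$ the map sends $Z_1=X_0$ and $X_1=A_0-(X_0^2+Y_0^2)-Z_0$, so membership of the image in a given $N_j$ is governed entirely by the pair $(X_0,Z_0)$ together with the sign information $|Y_0|\ge 0$; the picture in the $(X,Z)$-plane of Fig.~\ref{fig:Domain1} is therefore faithful. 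I will first dispatch (a) and (d), since they are essentially restatements of Lemma \ref{lem:Diff1}a,b: in $N_1$ one has $X_0\le\min(-|Z_0|,-R)$, which is exactly the hypothesis of Lemma \ref{lem:Diff1}a, giving $X_1\le X_0$ with equality only at $(X_0,Y_0,Z_0)=(-R,0,-R)$, hence strict decrease away from $(X,Z)=(-R,-R)$; and dually for $N_4$ using $F^{-1}$ and Lemma \ref{lem:Diff1}b.

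Next I would prove the forward invariance-type statements (b), (c). For (b): given $(X_0,Y_0,Z_0,W_0)\in N_1$, we already know $X_1\le X_0\le -R$ from (a); it remains to check $X_1\le -|Z_1|=-|X_0|$, but $X_0\le -R<0$ forces $-|X_0|=X_0$, and $X_1\le X_0=-|Z_1|$ is precisely what (a) gave, so $F(N_1)\subset N_1$. For (c) I would argue that starting in $N_2$ or $N_3$ the new third coordinate is $Z_1=X_0\ge -R$, which already lands the image in the half-space $\{Z\ge -R\}$; one then splits on whether $|Z_1|=|X_0|\le R$ (image in $N_2$) or $X_0>R$ — but in the latter case one checks $X_1\le -|Z_1|$ fails to be needed and instead verifies directly, using $|Z_0|\le R$ in $N_2$ (or $Z_0\ge R$ in $N_3$) together with the quadratic bound $A_0-(X_0^2+Y_0^2)-Z_0\le A_0-X_0^2+R$, that $X_1\le \min(-|Z_1|,-R)$, i.e.\ the image lies in $N_1$. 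Concretely one uses $R^2-2R-A_0=0$ from Lemma \ref{lem:BasicEq1} to see that $A_0-X_0^2+R\le -X_0$ whenever $X_0\ge R$, and $A_0-X_0^2+R\le -R$ whenever $X_0\ge R$, both following from $X_0^2-X_0-(A_0+R)\ge 0$ and $X_0^2-(A_0+2R)\ge 0$ at $X_0=R$ and convexity. Assertions (e) and (f) are then obtained by applying the identical reasoning to $F^{-1}$, whose second component plays the role of $X_1$ and whose relevant quadratic is $g=A_0-Z_0^2-2Z_0$ of Fig.~\ref{fig:DiffEq1.2}; the symmetry $(x,y,z,w)\mapsto(z,w,x,y)$ conjugating $F$ to $F^{-1}$ makes this precise, mapping $N_1\leftrightarrow N_4$ and $N_2,N_3$ to themselves with $X\leftrightarrow Z$, so (e) is the image of (b) and part of (c), and (f) is the image of the remaining case analysis in (c).

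The main obstacle I anticipate is the bookkeeping in (c) and (f): unlike the pure \He case, the extra $-Y_0^2$ (resp.\ $-W_0^2$) term is favorable for the inequalities we want (it only makes $X_1$ smaller), so it can simply be dropped, but one must be careful that the \emph{coupling} term does not reappear --- in these $(X,Y,Z,W)$ coordinates it has been absorbed into the second/fourth components only, so the $X$-recursion is genuinely the scalar \He recursion and the argument goes through; the coupling parameter $c$ will only enter later, in Theorems \ref{thm:MainA}--\ref{thm:MainB}, through the behavior of the $Y,W$ components. A secondary point requiring care is the treatment of the equality/boundary cases (the shared faces $X=-R$, $Z=\pm R$, $X=-|Z|$): since the $N_j$ are closed and overlap on these faces, the inclusions $F(N_i)\subset N_{j}\cup\cdots$ must be read as allowing the image point to sit on a shared boundary, and I would note once that this causes no trouble for the non-wandering-set argument to follow. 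Everything else is a routine verification of one-variable quadratic inequalities of the type already carried out in the proof of Lemma \ref{lem:Diff1}.
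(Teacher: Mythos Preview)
Your treatment of (a), (b) and (d) is correct and coincides with the paper's. The case analysis you outline for (c) is a valid alternative to the paper's parabola picture, but it already contains a slip: in $N_3$ one only has $X_0\ge -|Z_0|=-Z_0$ with $Z_0\ge R$, so $X_0$ may well be less than $-R$ (take $Z_0=2R$, $X_0=-\tfrac32 R$). Thus the sentence ``starting in $N_2$ or $N_3$ the new third coordinate is $Z_1=X_0\ge -R$'' is false for $N_3$, and the case $Z_1<-R$ must be handled separately (it does go through: one checks $X_1-X_0\le A_0-X_0^2-Z_0-X_0\le A_0-X_0^2<0$ using $Z_0\ge -X_0>R$, so the image lands in $N_1$).

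The more serious gap is your symmetry argument for (e) and (f). The reversor $\sigma:(X,Y,Z,W)\mapsto(Z,W,X,Y)$ does conjugate $F$ to $F^{-1}$, but it does \emph{not} send $N_1\leftrightarrow N_4$ nor fix $N_2,N_3$: for instance $(X,Z)=(2R,-R)$ lies in $N_4$ but $\sigma$ sends it to $(-R,2R)$, which is in $N_3$, not $N_1$; and $(2R,0)\in N_2$ while $\sigma(2R,0)=(0,2R)\in N_3$. The decomposition $N_1,\dots,N_4$ is manifestly asymmetric under $X\leftrightarrow Z$ (e.g.\ $N_2$ is unbounded in $X$ but bounded in $Z$). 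Hence the implications you want, such as $F^{-1}(N_4)\subset N_4$, cannot be read off from $F(N_1)\subset N_1$ via $\sigma$. The paper therefore proves (e) by a direct parametrisation of $N_3$ and $N_4$ as families of lines $Z=\mp X+\gamma$ and computes their $F^{-1}$-images explicitly, and proves (f) by the simple observation that $F^{-1}(N_2)\subset\{|X|\le R\}$, which combined with the definitions forces the image into $N_2\cup N_3\cup N_4$. Your ``identical reasoning'' strategy would also work, but it has to be carried out from scratch for $F^{-1}$ and the domains as defined, not transported by the nonexistent symmetry.
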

\begin{proof}
  a) Self-evident from lemma \ref{lem:Diff1}\,a). \\
  b) For $(X_0,Y_0,Z_0,W_0)\in N_1$, $X_1\le X_0$ follows from a). 
  From Eq.~(\ref{eq:map_tranformed}),  we have 
  $Z_1=X_0\le-R < 0$. So, we have $-|Z_1|=X_0$, which leads to the inequality $X_1\le -|Z_1|$. In addition, $X_1\le X_0\le -R$ follows from a). 
  Combining these, $(X_1,Y_1,Z_1,W_1)\in N_1$ is satisfied, 
  {\it i.e.}, $F(N_1)\subset N_1$ holds. \\
  c) Using lemma \ref{lem:horseshoe} by setting $C=R$, the region specified 
  by $|Z| \le R$, which covers
  the domain $N_2$,  is mapped to the horseshoe-shaped domain (see Fig.~\ref{fig:Horseshoe1.0}): 
  \begin{eqnarray}
  \label{eq:horseshoe-shaped}
  A_0-(Z_1^2+W_1^2)-R\le X_1 \le A_0-(Z_1^2+W_1^2)+R. 
  \end{eqnarray}
 The right boundary is expressed as
  \begin{eqnarray}
    X_1=A_0-(Z_1^2+W_1^2)+R  . \label{eq:UpF1} 
  \end{eqnarray}
  Since $W_1$ is real, $X_1$ is bounded as 
   \begin{eqnarray}
   \label{eq:UpF1_ineq}
    X_1\le A_0-Z_1^2+R. 
  \end{eqnarray}
The boundary of Eq.~(\ref{eq:UpF1_ineq}), namely, 
\begin{eqnarray}
\label{eq:UpF1_eq}
  X_1= A_0-Z_1^2+R, 
\end{eqnarray}
is shown by the red curve in Fig.~\ref{fig:Horseshoe1.0}. 
Using lemma \ref{lem:BasicEq1}, it is easy to check that the point $(X_1,Z_1)=(-R,\pm R)$ satisfies Eq.~(\ref{eq:UpF1_eq}). 
Therefore, the horseshoe-shaped region, specified 
by Eq.~(\ref{eq:horseshoe-shaped}), 
lies completely inside the left-hand side of the red curve expressed by Eq.~(\ref{eq:UpF1_eq}), and the red curve passes through the conner points 
$(X_1, Z_1) = (-R, \pm R)$ of $N_2$. Thus, 
$F(N_2)\subset N_1\cup N_2$ is concluded. 
  \begin{figure}
    \centering
    	\includegraphics[height=4cm,bb = 0 0 675 322]{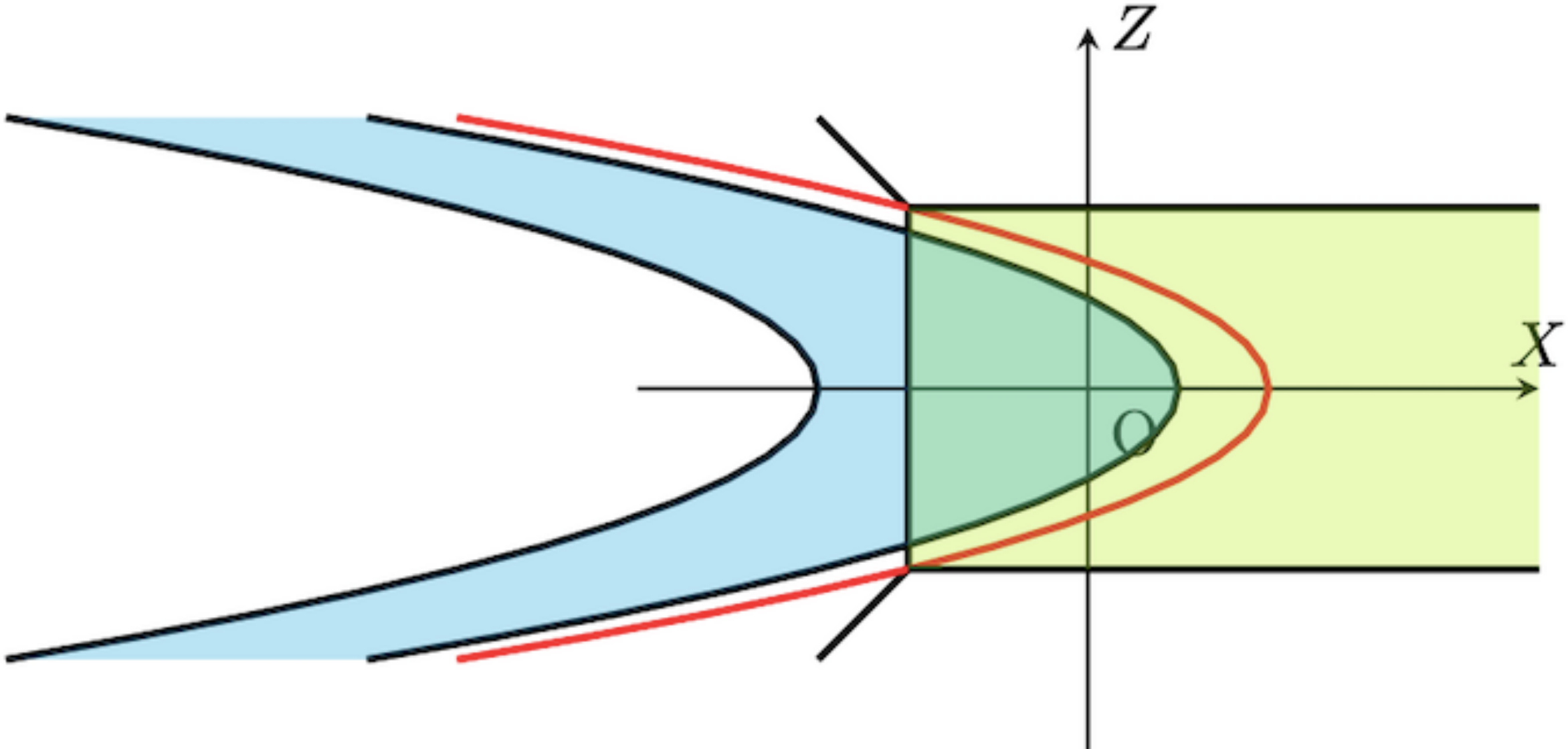}
    \caption{The blue region shows the horseshoe region obtained by setting $C=R$ in 
(\ref{lem:BasicInEQ}). 
    The red curve is the parabola in Eq.~(\ref{eq:UpF1_eq}). 
    }
    \label{fig:Horseshoe1.0}
  \end{figure}
   It is also easy to show that the line $Z=R$ is mapped to the leftmost curve 
  shown in Fig.~\ref{fig:Horseshoe1.0}. 
  As a result, $F(N_3)\subset N_1\cup N_2$ follows
  (see Fig.~\ref{fig:FDomain1}). 
  \begin{figure}
    \centering
		\includegraphics[width=0.7\linewidth]{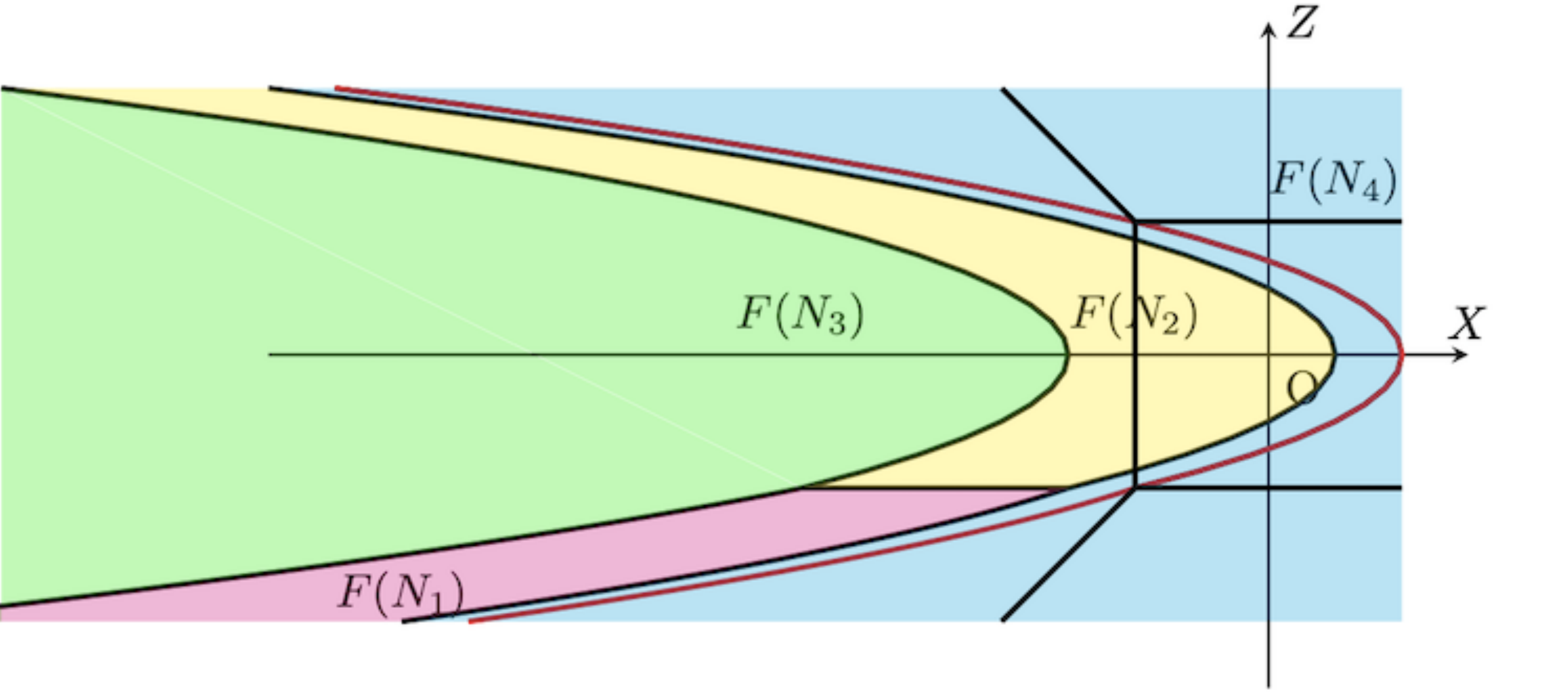}
    \caption{The iterated domains. The red curve represents the rightmost curves for 
    the regions $F(N_1),F(N_2)$ and $F(N_3)$.}
    \label{fig:FDomain1}
  \end{figure}
\noindent
d) Self-evident from lemma \ref{lem:Diff1}\,b). \\
e) Note that the domain $N_3$ in $(X,Z)$-plane is expressed as 
\begin{eqnarray}
N_3= \{(X,Y,Z,W)\, | \, Z=-X + \gamma, \, \gamma \ge 0, \, Z \ge R \}, 
\end{eqnarray} 
thus it is mapped by $F^{-1}$ as
\begin{eqnarray}
\fl
~~~~
F^{-1}(N_3) = \{(X,Y,Z,W)\, | \, Z=A_0-(X^2+Y^2)+X-\gamma, \, 
\gamma \ge 0, \, X \ge R \}. 
\end{eqnarray} 
For $\gamma \ge 0$, we find that 
$Z=A_0-(X^2+Y^2)+X-\gamma \le A_0-X^2+X \le -R$. 
Here we have used lemma \ref{lem:BasicEq1} and 
$X \ge R$ in the region $F^{-1}(N_3)$. 
Since the points in $F^{-1}(N_3)$ satisfy $X \ge R$ and $Z \le -R$, $F^{-1}(N_3) \subset N_4$ holds. 

In a similar way, the domain $N_4$ is expressed as 
\begin{eqnarray}
N_4= \{(X,Y,Z,W)\, | \, Z= X-\gamma, \,  Z \le -R, \, \gamma \ge 0\}, 
\end{eqnarray} 
thus it is mapped by $F^{-1}$ as
\begin{eqnarray}
\fl
~~~~
F^{-1}(N_4) = \{(X,Y,Z,W)\, | \, Z=A_0-(X^2+Y^2)-X-\gamma, 
\, 
\gamma \ge 0, \, X \le -R \}. 
\end{eqnarray} 
For $\gamma \ge 0$, we find that 
$Z-X=A_0-(X^2+Y^2)-X-\gamma-X \le A_0-X^2- 2X \le 0$. 
Here we have  again used lemma \ref{lem:BasicEq1} and 
$X \le -R$ in the region $F^{-1}(N_4)$. 
Since the points in $F^{-1}(N_4)$ satisfy 
$X \le -R$ and $Z \le X$, $F^{-1}(N_4) \subset N_4$ holds. 

\color{black}

\noindent
  f) It is easy to see that
  $F^{-1}(N_2)$ is contained 
  in the region $|X|\le R$. Combining these facts with the definitions of 
  $N_1,N_2$ and $N_3$, one can show that 
  $F^{-1}(N_2)\in N_2\cup N_3 \cup N_4$ holds (see Fig.~\ref{fig:InFDomain1}). 
  \end{proof}
  \begin{figure}
	\centering
    	\includegraphics[width=0.4\linewidth]{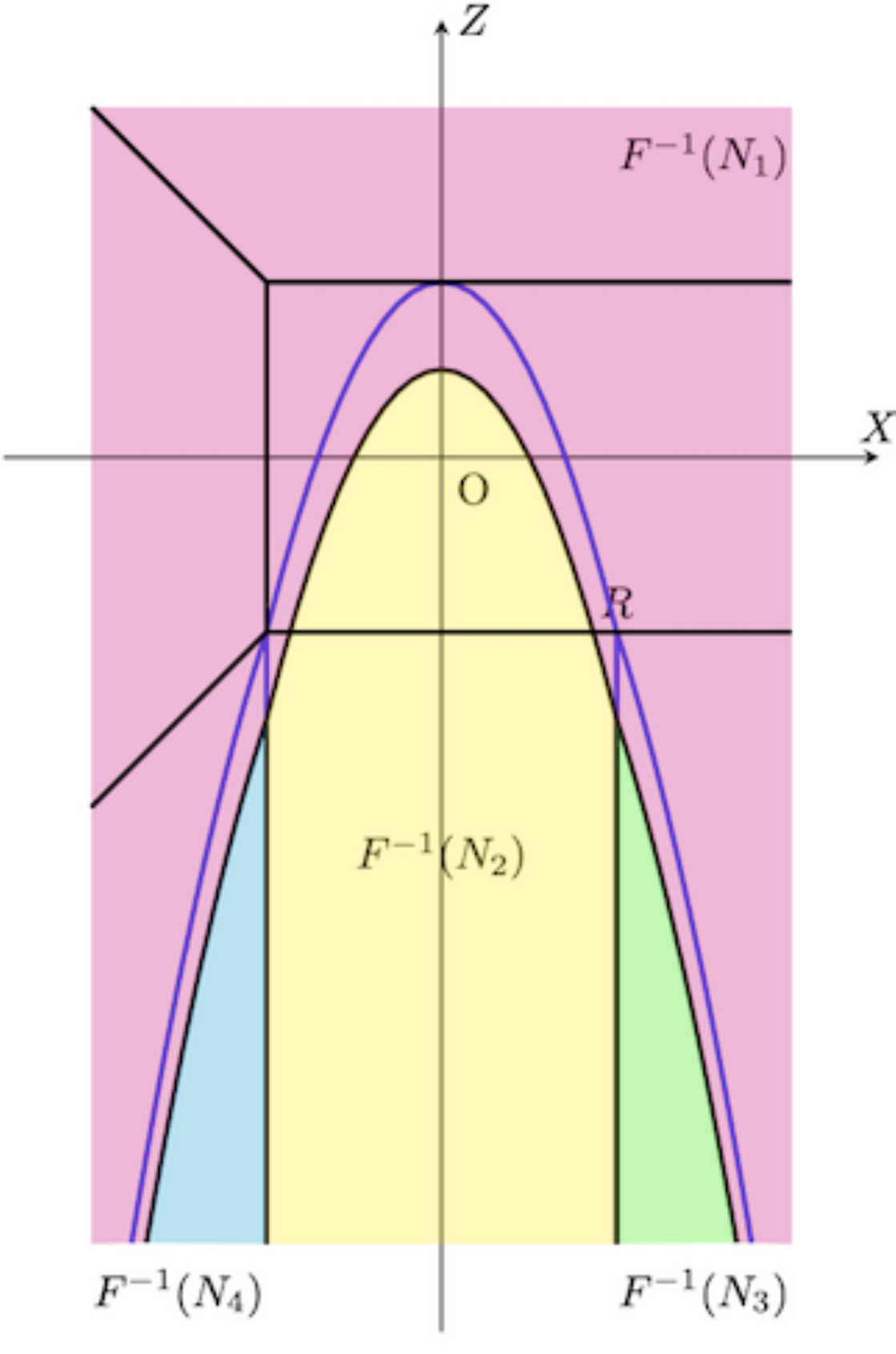}
     	\caption{The inverse images of each domain. The same set of parameters is used as in Fig. \ref{fig:FDomain1}. The blue curve represents the uppermost situation. }
     \label{fig:InFDomain1}
  \end{figure}


\subsection{Existence domain of the non-wandering set: proof of Main theorem \ref{thm:MainA} A-1) and Main theorem \ref{thm:MainB} B-1)}
\color{black}

In this section, based on Propositon \ref{pro:NonWand1}, 
we specify the domain containing the non-wandering set $\Omega(F)$. 
As illustrated in Fig.~\ref{fig:NonWand1} the flow of dynamics, 
regardless of whether the flow from $N_4$ to $N_2$ exists or not, 
\color{black}
an orbit launched in the domain $N_3$ does not return back to the vicinity of the initial point. 
Propositons \ref{pro:NonWand1} {a) implies that the coordinate $X$ of the points contained in $N_1$
are strictly decreasing, and also \ref{pro:NonWand1} d) implies the coordinate $Z$ of the points contained in $N_4$ are strictly increasing, 
\color{black}
so they do not return back to the vicinity of the initial points as well. 
This argument holds for the backward iteration $F^{-1}$. 
It follows that the points of the non-wandering set $\Omega(F)$ do not exist 
in the domains $N_1,N_3$ and $N_4$, and thus 
the non-wandering set $\Omega(F)$ should be contained in the domain $N_2$.

  \begin{figure}
        \centering
	\includegraphics[width=0.4\linewidth]{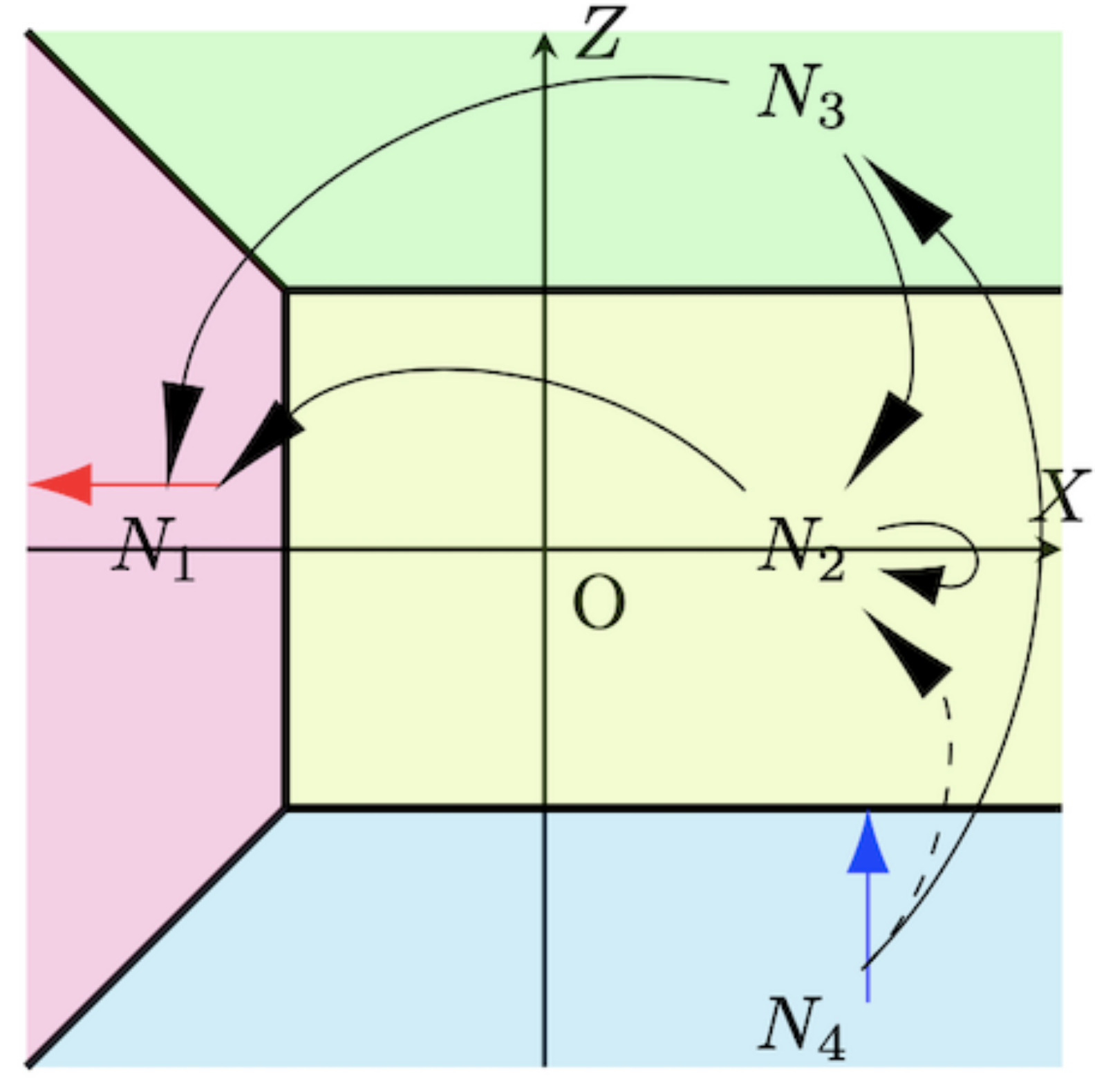}
\caption{The flow of dynamics. 
The dashed line shows that the flow can exist, but its proof 
is not given here. 
The red and blue arrows indicate 
a monotonic shift to the left and upward, respectively, in each region.
}
\label{fig:NonWand1}
   \end{figure}
   
   Since $\Omega(F)\subset N_2$, the non-wandering set can be 
   expressed as $\Omega(F)=\Lambda \subset \bigcap_{k=-\infty}^{\infty}{F^k(N_2)}$, 
   thus $\Omega(F)\subset F^{-1}(N_2)\cap N_2 \cap F(N_2)$ holds.
   Note here that we do not know whether the non-wandering set is empty or not.
   In the following, we use this condition to further specify the existence domain 
   of the non-wandering set.  More specifically, we will provide a hypercube 
   containing the region $F^{-1}(N_2)\cap N_2 \cap F(N_2)$.

First, note that the non-wandering set $\Omega(F)$ 
should be
located in the region $|Z|\le R$, 
since $\Omega(F)\subset F^{-1}(N_2)\cap N_2 \cap F(N_2)$.
From the mapping rule (\ref{eq:map_tranformed_inverse}), 
$|Z_0|\le R$ immediately leads to $|X_{-1}|\le R$. 
Therefore, the condition $\Omega(F)\subset F^{-1}(N_2)$ implies 
that $|X|\le R$ must be satisfied for the points in $\Omega(F)$
(see Fig.~\ref{fig:InFDomain1}). 

Next, we recall (\ref{eq:UpF1}), which tells us the maximum value of $X$ 
in the region $F(N_2)$, that is, 
\begin{eqnarray}
     X_1&=&A_0-(Z_1^2+W_1^2)+R\nonumber\\
     &\le& A_0-W_1^2+R. 
\end{eqnarray}
The condition $|X_1| \le R$, obtained above, leads to 
\begin{eqnarray}
-R \le A_0 - W_1^2 + R, 
\end{eqnarray}
which implies that $|W_1| \le R$ must be satisfied for the points in $\Omega(F)$
 (see Fig.~\ref{fig:DownFInF1}(a)). 
 Again, it follows immediately from the mapping rule 
(\ref{eq:map_tranformed_inverse}) 
that $|Y_0|\le R$. 

As a result of these arguments, we can conclude that 
\begin{eqnarray}
\Omega(F)\subset V_F=\{ (X,Y,Z,W) \, |\, |X|,|Y|,|Z|,|W|\le R\}. 
\end{eqnarray}
We then consider the hypercube $V_f$ in the original coordinates $(x,y,z,w)$, 
which contains the region $V_F$. 
The slice of $V_f$ by $(x,y)$-plane is illustrated in Fig.~\ref{fig:Circle1}, and we have
\begin{eqnarray}
\Omega(f)\subset V_f=\{ (x,y,z,w)\, |\,|x|,|y|,|z|,|w|\le 2\sqrt{2}R\}. 
\end{eqnarray}
The proof of our Main theorems \ref{thm:MainA} A-1) and \ref{thm:MainB} B-1) is thus 
completed. 
\color{black}

\color{black}

\begin{figure}
    \centering
	    \begin{tabular}{c}
	    \hspace{-68pt}
      \begin{minipage}{0.5\hsize}
        \centering
	\includegraphics[width=1.2\linewidth]{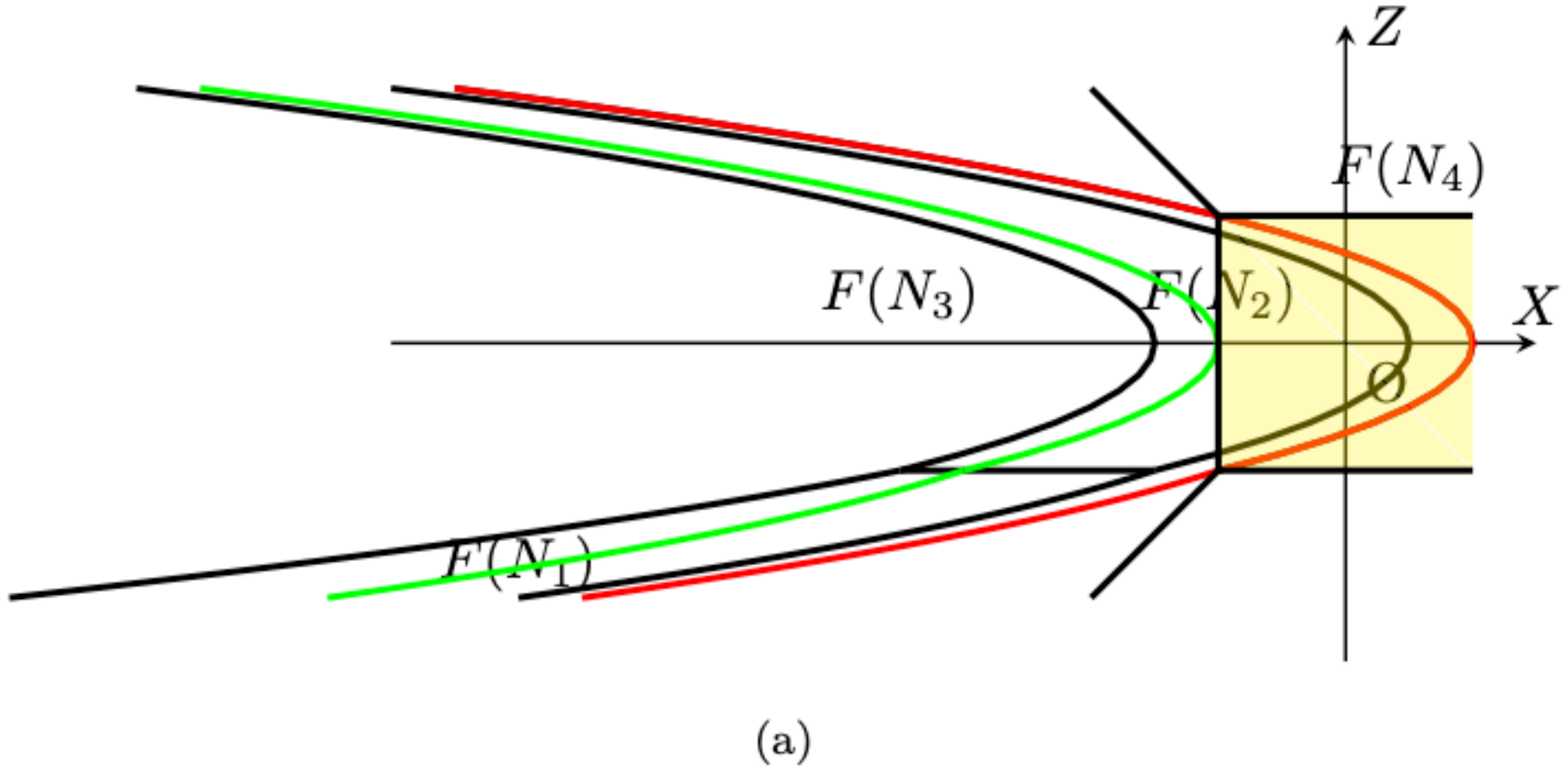}
      \end{minipage} 
      \begin{minipage}{0.5\hsize}
        \centering
	\includegraphics[width=0.5\linewidth]{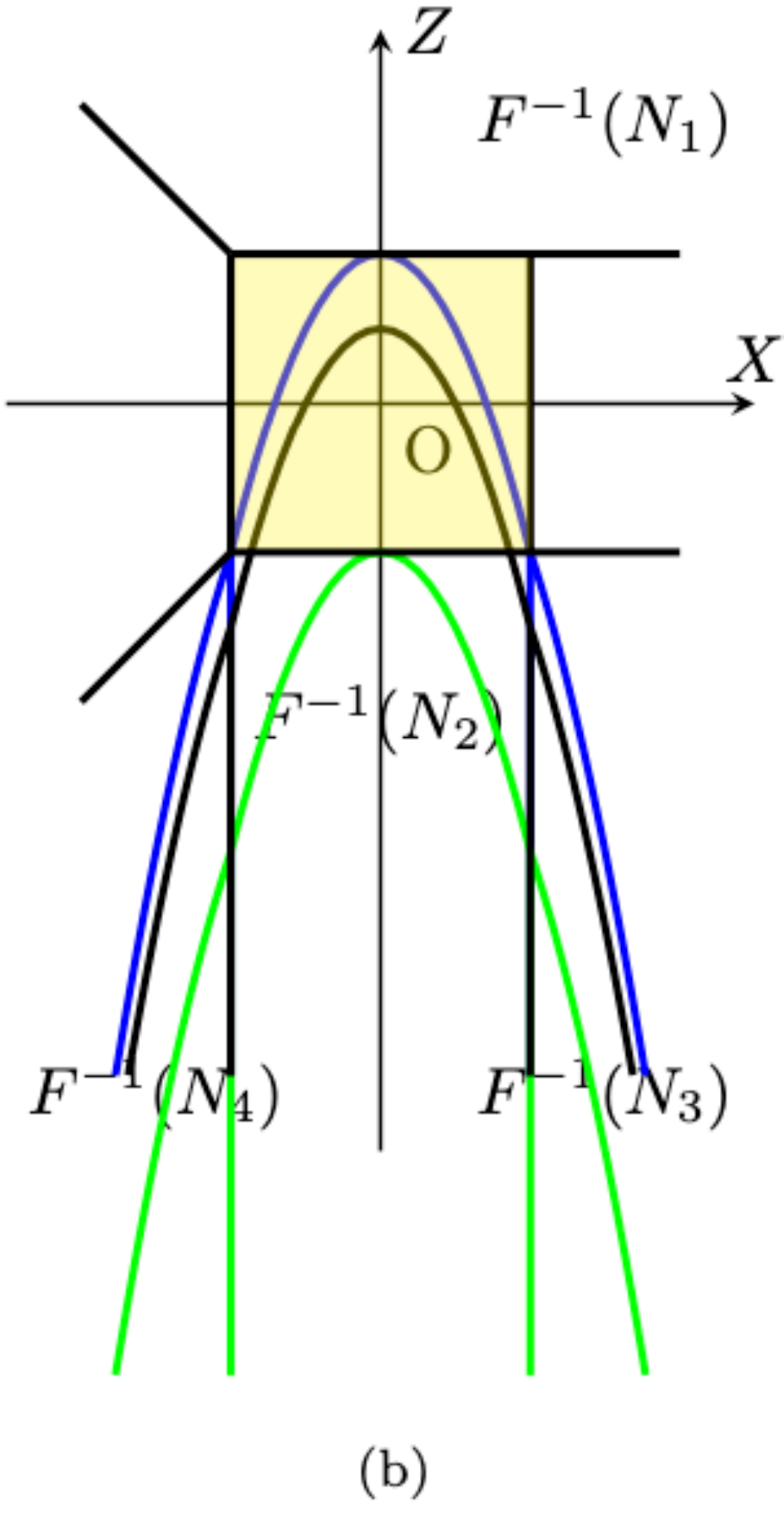}
	 	\vspace{+0pt}
      \end{minipage}\\
    \end{tabular}
    \caption{The domains mapped by $F$ and $F^{-1}$. 
    (a) The green curve shows the leftmost parabola for which
    $N_2 \cap F(N_2) \neq \emptyset$. (b) The green curve shows the
     lowest parabola for which
    $N_2 \cap F^{-1}(N_2) \neq \emptyset$. 
    }
    \label{fig:DownFInF1}
  \end{figure}
\begin{figure}
    \centering
	\includegraphics[width=6cm,bb = 0 0 329 322]{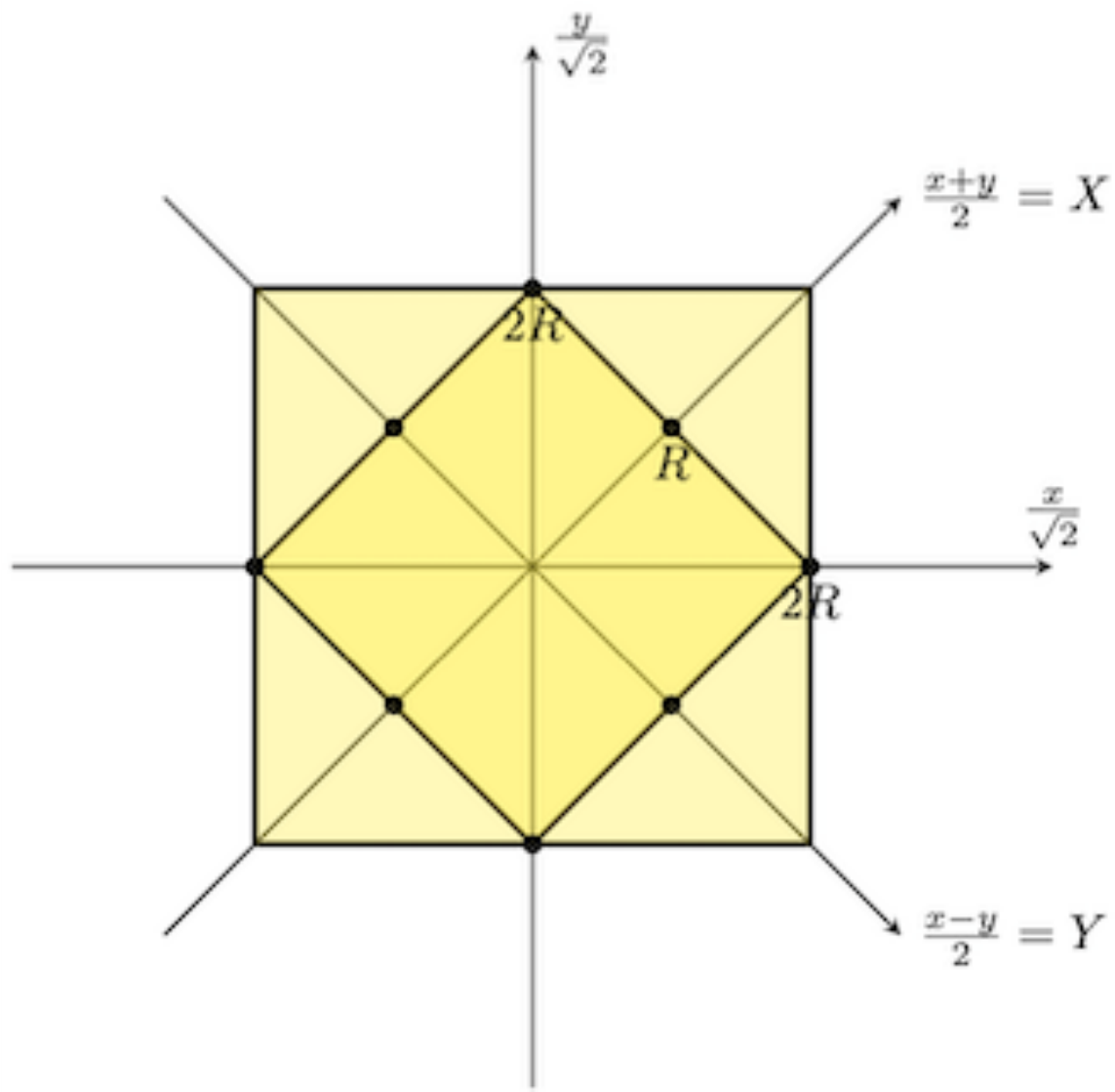}
    \caption{Domains containing the non-wandering set $\Omega(F)$.}
    \label{fig:Circle1}
  \end{figure}

\setcounter{equation}{0}
\section{Sufficient condition for uniform hyperbolicity}
\label{sec:sufficient_condition_for_hyperbolicity}

\subsection{Cone field condition}
We introduce here the cone field condition~\cite{Newhouse04}, which leads to a sufficient 
condition for uniform hyperbolicity. 
\begin{dfn}
Let $\mathbb{E}_1\subset \mathbb{R}^n$ and $\mathbb{E}_2$ be a proper subspace and 
its complementary subspace, respectively.  i.e., 
$\mathbb{R}^n = \mathbb{E}_1\oplus \mathbb{E}_2$. 
The standard unit cone determined by the subspaces $\mathbb{E}_1$ and $\mathbb{E}_2$
is given by the set, 
	\begin{eqnarray}
		K(\mathbb{E}_1,\mathbb{E}_2)=\{\bm{v}=(\bm{v}_1,\bm{v}_2)\,|\bm{v}_1\in\mathbb{E}_1,\bm{v}_2\in\mathbb{E}_2,|\bm{v}_2|\le |\bm{v}_1|\}. 
	\end{eqnarray}
\end{dfn}
\begin{dfn}
A cone in $\mathbb{R}^n$ with core $\mathbb{E}_1$, denoted by $\mathcal{C}(\mathbb{E}_1)$, is the image $T(K(\mathbb{E}_1,\mathbb{E}_2))$. Here $T : \mathbb{R}^n \to \mathbb{R}^n$ is a
linear automorphism such that $T(\mathbb{E}_1) = \mathbb{E}_1$. 
By a cone $\mathcal{C}$ in $\mathbb{R}^n$ 
we mean a set $\mathcal{C}(\mathbb{E}_1)$ 
for some proper subspace $\mathbb{E}_1$ of $\mathbb{R}^n$. 
\end{dfn}
\begin{dfn}
A cone field $\mathcal{C}=\{\mathcal{C}_{\bm{x}}\}$ on a manifold $M$
is a collection of cones $\mathcal{C}_{\bm{x}} \in T_xM$ for $x \subset M$.
\end{dfn}
\begin{dfn}
For a given cone field $\mathcal{C}=\{\mathcal{C}_{\bm{x}}\}_{\bm{x}\in M}$ and 
a diffeomorphism $h$ defined on the manifold $M$, 
let 
	\begin{eqnarray}
		m_{\mathcal{C},\bm{x}}&=&m_{C,\bm{x}}(h)=\inf_{\bm{v}\in{\mathcal{C}}_{\bm{x}}\backslash\{0\}}\frac{|Dh_{\bm{x}}(\bm{v})|}{|\bm{v}|}, \\
		m'_{\mathcal{C},\bm{x}}&=&m'_{C,\bm{x}}(h)=\inf_{\bm{v}\notin{\mathcal{C}}_{h(\bm{x})}}\frac{|Dh^{-1}_{h({\bm{x}})}(\bm{v})|}{|\bm{v}|}. 
	\end{eqnarray}
We call $m_{\mathcal{C},\bm{x}}$ and $m'_{\mathcal{C},\bm{x}}$
the minimal expansion and minimal co-expansion 
of $h$ on $\mathcal{C}_{\bm{x}}$, respectively. 
\begin{dfn}
We say that $h$ is expanding on the cone field $\mathcal{C}$ if 
\begin{eqnarray}
\inf_{\bm{x} \in \Lambda} m_{\mathcal{C},\bm{x}}(h) > 1 
~
\Longleftrightarrow
~
\inf_{x \in \Lambda} \inf_{\bm{v} \in \mathcal{C}_{\bm{x}}\backslash\{0\}}
\frac{|Dh_{\bm{x}}(\bm{v})|}{|\bm{v}|} > 1. 
\end{eqnarray}
Similarly, we say that $h$ is co-expanding on the cone field $\mathcal{C}$ if 
\begin{eqnarray}
\inf_{\bm{x} \in \Lambda} m'_{\mathcal{C},\bm{x}}(h) > 1 
~
\Longleftrightarrow
~
\sup_{x \in \Lambda} \sup_{\bm{u} \in Dh^{-1}_{h({\bm{x})}}(C_{h({\bm{x}})}^c) }
\frac{  |Dh_{\bm{x}}(u)|}{|\bm{u}|} < 1. 
\end{eqnarray}
\end{dfn}
\end{dfn}
\begin{dfn}
We say that the cone
field $\mathcal{C}_{\bm{x}}$ has constant orbit core dimension on $\Lambda$ if 
	\begin{eqnarray}
		\dim{\mathbb{E}_{\bm{x}}}=\dim{\mathbb{E}_{h(\bm{x})}}
	\end{eqnarray}
holds for all $x\in \Lambda$. Here $\mathbb{E}_{\bm{x}}$ and 
$\mathbb{E}_{h(\bm{x})}$ are the cores of 
$\mathcal{C}_{\bm{x}}$ and $\mathcal{C}_{h(\bm{x})}$,  respectively. 
\end{dfn}
Based on these notions, Newhouse has derived a necessary and sufficient condition 
for uniform hyperbolicity. 
\begin{thm}[Newhouse]
A sufficient condition for $\Lambda(h)$ to be uniformly hyperbolic is 
that there are an integer $N>0$ and a cone field $\mathcal{C}$ with constant orbit core dimension over $\Lambda(h)$ such that $h^N$ is both expanding and co-expanding on $\mathcal{C}$.
\end{thm}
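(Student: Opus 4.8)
The plan is to build, directly from the cone field, a $Dh$-invariant splitting $T_{\bm x}\mathbb{R}^n=E^s_{\bm x}\oplus E^u_{\bm x}$ over $\Lambda(h)$ together with uniform exponential contraction on $E^s$ under $Dh$ and on $E^u$ under $Dh^{-1}$; this is precisely uniform hyperbolicity. First I would reduce to $N=1$. If $\Lambda$ is hyperbolic for $h^N$ with splitting $\tilde E^s\oplus\tilde E^u$, then from the characterizations $\tilde E^s_{\bm y}=\{\bm v:\sup_n|Dh^{nN}_{\bm y}\bm v|<\infty\}$ and the analogous one for $\tilde E^u$, together with the bound $K:=\sup_{\bm y\in\Lambda}\|Dh_{\bm y}\|<\infty$ (compactness of $\Lambda$), one gets $Dh_{\bm x}(\tilde E^s_{\bm x})\subseteq\tilde E^s_{h(\bm x)}$ and $Dh_{\bm x}(\tilde E^u_{\bm x})\subseteq\tilde E^u_{h(\bm x)}$; equality follows because $Dh$ is invertible and the two dimensions sum to $n$ at every point. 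Splitting $k=mN+r$ converts the $h^N$-rates into $h$-rates after adjusting constants by a factor $K^{N-1}$, so $\Lambda$ is hyperbolic for $h$. Hence it suffices to treat the case where $h$ itself is expanding and co-expanding on $\mathcal C$; set $\lambda:=\inf_{\bm x}m_{\mathcal C,\bm x}(h)>1$ and $\mu:=\inf_{\bm x}m'_{\mathcal C,\bm x}(h)>1$.

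The first real step is to upgrade ``expanding $+$ co-expanding'' to cone invariance. I claim $Dh_{\bm x}(\mathcal C_{\bm x}\setminus\{0\})\subseteq\mathcal C_{h(\bm x)}$: if some $\bm v\in\mathcal C_{\bm x}\setminus\{0\}$ had $Dh_{\bm x}\bm v\notin\mathcal C_{h(\bm x)}$, applying the co-expansion bound to $Dh_{\bm x}\bm v$ would give $|\bm v|=|Dh^{-1}_{h(\bm x)}(Dh_{\bm x}\bm v)|\ge\mu|Dh_{\bm x}\bm v|\ge\mu\lambda|\bm v|$, impossible since $\mu\lambda>1$. Dually, $Dh^{-1}_{h(\bm x)}$ sends the open complement of $\mathcal C_{h(\bm x)}$ into the open complement of $\mathcal C_{\bm x}$ and expands such vectors by a factor $\ge\mu$; taking closures (a cone field is continuous, and the closed complement of $\mathcal C_{\bm y}=T_{\bm y}K(\mathbb E_1,\mathbb E_2)$ is the cone $\mathcal D_{\bm y}:=T_{\bm y}K(\mathbb E_2,\mathbb E_1)$ of core dimension $n-\dim\mathbb E_{\bm y}$) this reads $Dh^{-1}_{h(\bm x)}(\mathcal D_{h(\bm x)}\setminus\{0\})\subseteq\mathcal D_{\bm x}$ with $Dh^{-1}$-expansion $\ge\mu$. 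Thus we are in the classical cone-criterion setting: a continuous transverse pair of cone fields, one forward $Dh$-invariant and $Dh$-expanded, the complementary one backward $Dh$-invariant and $Dh^{-1}$-expanded, with core dimensions constant along orbits.

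From here I would run the standard construction. Put $\mathcal C^{(n)}_{\bm x}:=Dh^n_{h^{-n}(\bm x)}(\mathcal C_{h^{-n}(\bm x)})$; forward invariance makes $(\mathcal C^{(n)}_{\bm x})_{n\ge0}$ a nested decreasing family of closed cones, and I set $E^u_{\bm x}:=\bigcap_{n\ge0}\mathcal C^{(n)}_{\bm x}$. The one quantitative lemma needed is that the angular aperture of $\mathcal C^{(n)}_{\bm x}$ about its core tends to $0$ uniformly in $\bm x$; it is forced by the dichotomy ``vectors in a cone are $Dh$-expanded by $\ge\lambda$, vectors whose image leaves the cone are $Dh$-contracted by $\le\mu^{-1}$,'' iterated step by step using the uniform constants and the compactness of $\Lambda$. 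Consequently $E^u_{\bm x}$ is a linear subspace of dimension equal to the orbit-constant core dimension $\dim\mathbb E_{\bm x}$, it depends continuously on $\bm x$, and $Dh_{\bm x}(E^u_{\bm x})=E^u_{h(\bm x)}$ by construction and dimension. Symmetrically $E^s_{\bm x}:=\bigcap_{n\ge0}Dh^{-n}_{h^n(\bm x)}(\mathcal D_{h^n(\bm x)})$ is a continuous $Dh$-invariant subbundle of complementary dimension. They are transverse: a nonzero $\bm v\in E^u_{\bm x}\cap E^s_{\bm x}$ would satisfy $|Dh^n_{\bm x}\bm v|\ge\lambda^n|\bm v|\to\infty$ (since $Dh^k_{\bm x}\bm v\in\mathcal C_{h^k(\bm x)}$ for all $k\ge0$ by invariance) and also $|Dh^n_{\bm x}\bm v|\le\mu^{-n}|\bm v|\to0$ (unwinding membership in $E^s$ and the $Dh^{-1}$-expansion on $\mathcal D$), a contradiction; hence $E^s_{\bm x}\oplus E^u_{\bm x}=\mathbb R^n$. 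Those same two estimates give uniform exponential contraction of $E^s$ under $Dh$ at rate $\mu^{-1}$ and of $E^u$ under $Dh^{-1}$ at rate $\lambda^{-1}$, which is uniform hyperbolicity; undoing the $h^N\to h$ reduction completes the proof.

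I expect the only genuine obstacle to be that angular-shrinking lemma: one must choose a convenient aperture functional on cones, verify that each application of $Dh_{\bm x}$ multiplies it by a factor bounded away from $1$ uniformly over $\Lambda$ (where the strict inequalities $\lambda,\mu>1$ and compactness enter), and keep track of the fact that the $\mathcal C_{\bm x}$ are images of the standard cone under automorphisms $T_{\bm x}$ rather than standard cones themselves, so that naive inner-product estimates must be transported through $T_{\bm x}$ with uniformly bounded distortion. The remaining ingredients — the invariance bookkeeping, continuity of the limit bundles, the dimension counts, and the $h^N\to h$ reduction — are routine once that lemma is available.
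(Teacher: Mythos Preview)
The paper does not prove this theorem: it is quoted as a result of Newhouse (reference~\cite{Newhouse04}) and immediately used as a black box to deduce the subsequent corollary. There is no argument in the paper against which to compare yours.

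For what it is worth, your outline is the standard cone-criterion construction. The derivation of forward invariance $Dh_{\bm x}(\mathcal C_{\bm x}\setminus\{0\})\subset\mathcal C_{h(\bm x)}$ from the pair of inequalities $\lambda,\mu>1$ is correct and is the one step specific to Newhouse's formulation (which, unlike the textbook cone criterion, does not assume invariance of the cone field a priori). The nested-cone construction of $E^u$ and $E^s$, the transversality and dimension arguments, and the $h^N\to h$ reduction are all routine, and you have correctly located the only substantive work in the angular-shrinking lemma. One minor caveat: you invoke continuity of the cone field and compactness of $\Lambda$, neither of which is stated explicitly in the paper's definitions; these are, however, standing hypotheses in Newhouse's setting and are harmless here.
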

Here we can show the following. 
\begin{cor}
If there exists a standard unit cone field $\mathcal{C}_{\bm{x}}$ on $\Lambda(h)$ with 
$h$-invariant cones, i.e., $Dh(\mathbb{E}_{\bm{x}}) = \mathbb{E}_{h(\bm{x})}, \, \forall x \in \Lambda(h)$, 
such that $h$ is both expanding and co-expanding, then 
$\Lambda(h)$ is uniformly hyperbolic. 
\label{the:SuffConefield}
\end{cor}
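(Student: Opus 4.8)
The strategy is to derive this corollary directly from Newhouse's theorem by checking that the hypotheses of the corollary imply the hypotheses of the theorem with $N=1$. The main observation is that a \emph{standard unit cone field with $h$-invariant cores} is, almost by definition, a cone field in the sense required by the theorem: for each $\bm{x}\in\Lambda(h)$ we have a splitting $\mathbb{R}^n=\mathbb{E}_{\bm{x}}\oplus\mathbb{E}'_{\bm{x}}$ and $\mathcal{C}_{\bm{x}}=K(\mathbb{E}_{\bm{x}},\mathbb{E}'_{\bm{x}})$. First I would record that the hypothesis $Dh_{\bm{x}}(\mathbb{E}_{\bm{x}})=\mathbb{E}_{h(\bm{x})}$ means $Dh_{\bm{x}}$ is, when restricted to cores, a linear automorphism sending $\mathbb{E}_{\bm{x}}$ to $\mathbb{E}_{h(\bm{x})}$; in particular $\dim\mathbb{E}_{\bm{x}}=\dim\mathbb{E}_{h(\bm{x})}$ for all $\bm{x}\in\Lambda(h)$, which is exactly the \textbf{constant orbit core dimension} condition.

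Next I would verify that ``expanding'' and ``co-expanding'' as assumed in the corollary coincide with the quantitative conditions $\inf_{\bm{x}\in\Lambda} m_{\mathcal{C},\bm{x}}(h)>1$ and $\inf_{\bm{x}\in\Lambda} m'_{\mathcal{C},\bm{x}}(h)>1$ appearing in Newhouse's theorem. This is a matter of matching definitions: expansion on the cone field is literally $\inf_{\bm{x}}\inf_{\bm{v}\in\mathcal{C}_{\bm{x}}\setminus\{0\}}|Dh_{\bm{x}}(\bm{v})|/|\bm{v}|>1$, and the co-expansion condition is the dual statement for $Dh^{-1}$ on vectors outside the image cones, which is precisely $\inf_{\bm{x}} m'_{\mathcal{C},\bm{x}}(h)>1$ after unwinding the $\Longleftrightarrow$ in the definition. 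One small point to be careful about: Newhouse's theorem needs $h^N$ expanding and co-expanding on a cone field with constant orbit core dimension; here I simply take $N=1$, and I should note that the invariance of cores is what guarantees $\mathcal{C}$ is genuinely a cone field that $Dh$ maps appropriately, so no iterate is needed.

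Having matched all three ingredients (a cone field with constant orbit core dimension, $h=h^1$ expanding on it, $h=h^1$ co-expanding on it), I would conclude by invoking Newhouse's theorem directly to obtain that $\Lambda(h)$ is uniformly hyperbolic. I expect the only genuine subtlety — the ``hard part'' — to be checking that the core-invariance hypothesis $Dh(\mathbb{E}_{\bm{x}})=\mathbb{E}_{h(\bm{x})}$ is the right translation of ``the cone field is mapped into itself in a controlled way'': strictly, Newhouse's general framework allows cones whose cores are merely subspaces with matching dimensions rather than literally invariant, so I should remark that invariant cores are a special (stronger, and easier to use) case, and that expansion on the cone together with core-invariance automatically gives $Dh_{\bm{x}}(\mathcal{C}_{\bm{x}})\subset \mathcal{C}_{h(\bm{x})}$, so the iterated cone condition propagates. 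All the remaining steps are definitional bookkeeping rather than real estimates.
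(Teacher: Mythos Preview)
Your proposal is correct and follows essentially the same approach as the paper: use the core-invariance hypothesis $Dh(\mathbb{E}_{\bm{x}})=\mathbb{E}_{h(\bm{x})}$ to conclude constant orbit core dimension, note that expansion and co-expansion are assumed, and invoke Newhouse's theorem with $N=1$. The paper's proof is terser than yours (it is two sentences), but the logical content is identical; your additional remarks about $Dh_{\bm{x}}(\mathcal{C}_{\bm{x}})\subset\mathcal{C}_{h(\bm{x})}$ are fine but not strictly needed since Newhouse's theorem as stated only requires constant orbit core dimension plus expansion/co-expansion.
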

\begin{proof}
Since $\mathbb{E}_{\bm{x}}$ is invariant under $h$, it has constant orbit core dimension. 
The fact that for any $\bm{x} \in \Lambda(h)$ $\lambda\le m_{\mathcal{C},\bm{x}}$ and $\lambda\le m'_{\mathcal{C},\bm{x}}$ imply that $h$ is both expanding and co-expanding. Hence $h$ is uniformly hyperbolic. 
\\
\end{proof}

\subsection{Sufficient condition for uniform hyperbolicity: the case with four symbols in the anti-integrable limit}
\label{sec:uniform_hyp_4}
\label{sec:CHHyper4}
We first derive a sufficient condition for the case whose anti-integrable limit has 
four symbols. 
The Jabcobian for the forward and backward iterations is 
respectively given by 
\begin{eqnarray}
  &Jf&=\left(\begin{array}{cccc}
    -2x+c&-c&-1&0\\
    -c&-2y+c&0&-1\\
    1&0&0&0\\
    0&1&0&0
  \end{array}\right), \\
  &Jf^{-1}&=\left(\begin{array}{cccc}
    0&0&1&0\\
    0&0&0&1\\
    -1&0&-2z+c&-c\\
    0&-1&-c&-2w+c
  \end{array}\right). 
\end{eqnarray}

\n
The following lemma will be used in the subsequent argument. 
\begin{lmm}
Let 
	\begin{eqnarray}
		  G(x,y)=\left(\begin{array}{cc}
			-2x+c&-c\\
			-c&-2y+c
  		\end{array}\right), 
	\end{eqnarray}
where $x,y \in {\Bbb R}$ satisfy the condition $2\lambda+2+c\le |x|,|y|$. 
Then, for any vector $\bm{w}_0= (\xi,\eta)^t$, the following holds:
	\begin{eqnarray}
		(2\lambda+2)|\bm{w}_0|\le |\bm{w}_1|, 
	\end{eqnarray}
	where $\bm{w}_1=G(x,y) \bm{w}_0$. 
	\label{lem:CHHyper4}
\end{lmm}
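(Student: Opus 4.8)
The plan is to show directly that the linear map $G(x,y)$ expands every vector by at least the factor $2\lambda+2$, using only the hypothesis $2\lambda+2+c\le|x|,|y|$. Writing $\bm{w}_1=(\xi_1,\eta_1)^t$, we have
\begin{eqnarray}
\xi_1&=&(-2x+c)\xi-c\eta,\nonumber\\
\eta_1&=&-c\xi+(-2y+c)\eta.\nonumber
\end{eqnarray}
First I would bound each component from below by isolating the diagonal (``large'') term and treating the off-diagonal $c$-term as a perturbation:
$|\xi_1|\ge|{-2x+c}|\,|\xi|-c|\eta|\ge(2|x|-c)|\xi|-c|\eta|$, and similarly $|\eta_1|\ge(2|y|-c)|\eta|-c|\xi|$. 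Using $2|x|-c\ge 2(2\lambda+2+c)-c=2(2\lambda+2)+c$ and likewise for $y$, this gives $|\xi_1|\ge 2(2\lambda+2)|\xi|+c|\xi|-c|\eta|$ and $|\eta_1|\ge 2(2\lambda+2)|\eta|+c|\eta|-c|\xi|$.

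Next I would combine the two component estimates. The cleanest route is to add them after an appropriate case split on whether $|\xi|\ge|\eta|$ or $|\eta|\ge|\xi|$, so that the cross terms $c|\xi|-c|\eta|$ have a controllable sign. For instance, if $|\xi|\ge|\eta|$ then $|\xi_1|\ge 2(2\lambda+2)|\xi|$ already, and since $|\bm{w}_1|\ge|\xi_1|$ while $|\bm{w}_0|=\sqrt{|\xi|^2+|\eta|^2}\le\sqrt{2}\,|\xi|$, one needs the factor $2(2\lambda+2)/\sqrt2=\sqrt2(2\lambda+2)\ge 2\lambda+2$, which holds. Alternatively, to get the sharper constant without the $\sqrt2$ loss, I would instead estimate $|\bm{w}_1|^2=|\xi_1|^2+|\eta_1|^2$ and show it is at least $(2\lambda+2)^2(|\xi|^2+|\eta|^2)$ by expanding the squares and checking that all the $c$-dependent cross terms are dominated; concretely, from $|\xi_1|\ge(2|x|-c)|\xi|-c|\eta|$ one gets, whenever the right side is nonnegative, $|\xi_1|^2\ge(2|x|-c)^2|\xi|^2-2c(2|x|-c)|\xi||\eta|+c^2|\eta|^2$, and summing with the analogous inequality for $|\eta_1|^2$ the indefinite terms can be absorbed using $2|x|-c,\,2|y|-c\ge 2(2\lambda+2)$. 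I expect the matrix $G$ to be (after a sign change) diagonally dominant with a large margin, which is really what is being exploited.

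The main obstacle I anticipate is handling the signs of the off-diagonal contributions carefully: the naive triangle-inequality bound $|\xi_1|\ge|{-2x+c}||\xi|-c|\eta|$ is only useful when its right-hand side is nonnegative, so the argument needs either the case split described above or a Gershgorin-type observation that $G$ is invertible with $\|G^{-1}\|\le\big(2(2\lambda+2)+c - c\big)^{-1}$ is too crude — rather one should use that the smallest singular value of a matrix whose symmetric part (or, after the sign flip $\mathrm{diag}(-1,-1)\,G$, whose entries) is diagonally dominant by margin $2(2\lambda+2)$ is at least $2(2\lambda+2)\ge 2\lambda+2$. I would present whichever of these is shortest; the case split on $|\xi|$ versus $|\eta|$ is the most elementary and is what I would write up, accepting the harmless $\sqrt2$ slack since $\sqrt2(2\lambda+2)>2\lambda+2$. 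Everything here is a routine estimate once the hypothesis $2\lambda+2+c\le|x|,|y|$ is plugged in.
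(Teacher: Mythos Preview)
Your proposal is correct and takes essentially the same approach as the paper: case-split on whether $|\xi|\ge|\eta|$ or $|\eta|>|\xi|$, bound the dominant output component via the triangle inequality, and invoke the hypothesis $|x|,|y|\ge 2\lambda+2+c$. The only cosmetic difference is that the paper passes from $|\xi_1|\ge 2(2\lambda+2)|\xi|$ to $(2\lambda+2)|\bm{w}_0|$ via $2|\xi|\ge|\xi|+|\eta|\ge|\bm{w}_0|$ rather than your $\sqrt2\,|\xi|\ge|\bm{w}_0|$, so it lands on the stated constant without the $\sqrt2$ slack you were prepared to accept.
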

\begin{proof}
In the case $|\eta_0|\le|\xi_0|$, we have
\begin{eqnarray}
|\bm{w}_1|&\ge& |\xi_1|\nonumber\\
	&=&|(-2x+c)\xi_0-c\eta_0|\nonumber\\
	&\ge&|-(2x-c)\xi_0|-|c\eta_0|\nonumber\\
	&=&|(2x-c)||\xi_0|-c|\eta_0|\nonumber\\
	&\ge&(2|x|-c)|\xi_0|-c|\eta_0|\nonumber\\
	&\ge&2(|x|-c)|\xi_0|\nonumber\\
	&=&(|x|-c)(|\xi_0|+|\xi_0|)\nonumber\\
	&\ge&(|x|-c)(|\xi_0|+|\eta_0|)\nonumber\\
	&\ge&(|x|-c)|\bm{w}_0|\nonumber\\
	&\ge&(2\lambda+2)|\bm{w}_0|. \nonumber
\end{eqnarray}
Similarly, for $|\xi_0|<|\eta_0|$, 
\begin{eqnarray}
|\bm{w}_1|&\ge& |\eta_1|\nonumber\\
	&=&|-c\xi_0+(-2y+c)\eta_0|\nonumber\\
	&\ge&|-(2y-c)\eta_0|-|c\xi_0|\nonumber\\
	&=&|(2y-c)||\eta_0|-c|\xi_0|\nonumber\\
	&\ge&(2|y|-c)|\eta_0|-c|\xi_0|\nonumber\\
	&\ge&2(|y|-c)|\eta_0|\nonumber\\
	&=&(|y|-c)(|\eta_0|+|\eta_0|)\nonumber\\
	&\ge&(|y|-c)(|\xi_0|+|\eta_0|)\nonumber\\
	&\ge&(|y|-c)|\bm{w}_0|\nonumber\\
	&\ge&(2\lambda+2)|\bm{w}_0|. \nonumber
\end{eqnarray}
\end{proof}
Suppose that $\mathbb{E}^+,\mathbb{E}^-\subset \mathbb{R}^2$ 
gives ${\mathbb{R}}^4=\mathbb{E}^-\oplus\mathbb{E}^+$. 
For $K=K(\mathbb{E}^+,\mathbb{E}^-)$, we can show the following. 
\begin{thm}
Suppose that the matrix $G(x,y)$ satisfies the condition, 
	\begin{eqnarray}
		(2\lambda+2)|\bm{v}|\le |G(x,y)\bm{v}|, 
	\end{eqnarray}
for some $\lambda>1$ and any vector $\bm{v}(\in {\mathbb{R}^2})$. 
In addition, $Jf$ and $Jf^{-1}$ are expressed in terms of 
the $2\times 2$ identify matrix $I_2$ and the zero matrix $O_2$ as 
	\begin{eqnarray}
	  	&Jf&=\left(\begin{array}{cc}
    			G(x,y)&-I_2\\
			I_2&O_2
  		\end{array}\right), \\
  		&Jf^{-1}&=\left(\begin{array}{cc}
			O_2&I_2\\
			-I_2&G(z,w)
  		\end{array}\right), 
 	\end{eqnarray}
then the following holds:\\
a) For any vector $\bm{v}_0\in K$, $\lambda|\bm{v}_0|\le |\bm{v}_1|$ holds where $\bm{v}_1=Jf(\bm{v}_0)$.  \\
b) For any vector $\bm{v}_0\notin K$, $\lambda|\bm{v}_0|\le |\bm{v}_{-1}|$ holds where 
$\bm{v}_{-1}=Jf^{-1}(\bm{v}_0)$. 
	\label{the:CHHyper}
\end{thm}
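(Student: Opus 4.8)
The plan is to exploit the block structure of $Jf$ and $Jf^{-1}$ together with Lemma \ref{lem:CHHyper4}, writing every $4$-vector as a pair of $2$-vectors living in $\mathbb{E}^+\oplus\mathbb{E}^-$. For part a), take $\bm{v}_0=(\bm{a}_0,\bm{b}_0)$ with $\bm{a}_0\in\mathbb{E}^+$, $\bm{b}_0\in\mathbb{E}^-$ and $|\bm{b}_0|\le|\bm{a}_0|$, so that $|\bm{v}_0|^2=|\bm{a}_0|^2+|\bm{b}_0|^2\le 2|\bm{a}_0|^2$. From the block form of $Jf$ one gets $\bm{v}_1=Jf(\bm{v}_0)=(G(x,y)\bm{a}_0-\bm{b}_0,\ \bm{a}_0)$. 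The hypothesis gives $|G(x,y)\bm{a}_0|\ge(2\lambda+2)|\bm{a}_0|$, so by the reverse triangle inequality the first block of $\bm{v}_1$ has norm at least $(2\lambda+2)|\bm{a}_0|-|\bm{b}_0|\ge(2\lambda+1)|\bm{a}_0|$. Hence $|\bm{v}_1|\ge(2\lambda+1)|\bm{a}_0|$, while $|\bm{v}_0|\le\sqrt{2}\,|\bm{a}_0|$, and since $(2\lambda+1)\ge\sqrt{2}\,\lambda$ for $\lambda>1$ (indeed for all $\lambda\ge0$) we conclude $|\bm{v}_1|\ge\lambda|\bm{v}_0|$.

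For part b) the argument is symmetric. If $\bm{v}_0=(\bm{a}_0,\bm{b}_0)\notin K$ then $|\bm{a}_0|<|\bm{b}_0|$, so $|\bm{v}_0|^2\le 2|\bm{b}_0|^2$. The block form of $Jf^{-1}$ gives $\bm{v}_{-1}=Jf^{-1}(\bm{v}_0)=(\bm{b}_0,\ -\bm{a}_0+G(z,w)\bm{b}_0)$. Applying the hypothesis to $G(z,w)$ (which also satisfies the stated lower bound, since $z,w$ range over the same admissible set and Lemma \ref{lem:CHHyper4} was stated for $G$ with either pair of arguments), the second block has norm at least $(2\lambda+2)|\bm{b}_0|-|\bm{a}_0|\ge(2\lambda+1)|\bm{b}_0|$, so $|\bm{v}_{-1}|\ge(2\lambda+1)|\bm{b}_0|\ge\sqrt{2}\,\lambda|\bm{b}_0|\ge\lambda|\bm{v}_0|$.

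The only genuinely delicate point is the bookkeeping of which block dominates: in a) one must know $|\bm{b}_0|\le|\bm{a}_0|$ to absorb the $-\bm{b}_0$ term into the expansion coming from $G(x,y)\bm{a}_0$, and in b) one must know $|\bm{a}_0|<|\bm{b}_0|$ to do the analogous absorption in the $G(z,w)$-block of $Jf^{-1}$; this is exactly the distinction between $\bm{v}_0\in K$ and $\bm{v}_0\notin K$, so the two cases of the theorem map cleanly onto the two blocks. A secondary point worth stating explicitly is that the hypothesis $(2\lambda+2)|\bm{v}|\le|G(x,y)\bm{v}|$ is precisely the conclusion of Lemma \ref{lem:CHHyper4} under the coordinate bound $2\lambda+2+c\le|x|,|y|$, so in applications one verifies that bound on $V_f$ rather than the abstract inequality; here we simply take it as given. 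No further estimates are needed — everything reduces to one use of the reverse triangle inequality per case plus the crude constant comparison $2\lambda+1\ge\sqrt{2}\,\lambda$.
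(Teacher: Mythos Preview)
Your proof is correct and follows essentially the same route as the paper: split $\bm{v}_0$ into its $\mathbb{E}^+$ and $\mathbb{E}^-$ blocks, read off $Jf\bm{v}_0=(G\bm{v}_0^+-\bm{v}_0^-,\,\bm{v}_0^+)$, apply the reverse triangle inequality together with the hypothesis on $G$ to the first block, and then use the cone condition $|\bm{v}_0^-|\le|\bm{v}_0^+|$ to compare with $|\bm{v}_0|$. The only cosmetic difference is in the final constant bookkeeping: the paper bounds $|\bm{v}_0|\le|\bm{v}_0^+|+|\bm{v}_0^-|\le 2|\bm{v}_0^+|$ and pairs this with $|\bm{v}_1|\ge 2\lambda|\bm{v}_0^+|$, whereas you use $|\bm{v}_0|\le\sqrt{2}\,|\bm{a}_0|$ and pair it with $|\bm{v}_1|\ge(2\lambda+1)|\bm{a}_0|$; both yield the desired $|\bm{v}_1|\ge\lambda|\bm{v}_0|$. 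Your remark that the hypothesis must also be assumed for $G(z,w)$ in part b) is well taken---the paper uses this silently.
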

\begin{proof}
For a), we have 
	\begin{eqnarray}
		|\bm{v}_1|&=&\left|\left(\begin{array}{c}
    			G(x,y)\bm{v}_0^+-\bm{v}_0^-\\
    			\bm{v}_0^+
  		\end{array}\right)\right|\nonumber\\
		&\ge&|G(x,y)\bm{v}_0^+-\bm{v}_0^-|-|\bm{v}_0^+|\nonumber\\
		&\ge&|G(x,y)\bm{v}_0^+|-|\bm{v}_0^-|-|\bm{v}_0^+|\nonumber\\
		&\ge&(2\lambda+1)|\bm{v}_0^+|-|\bm{v}_0^-|\nonumber\\
		&\ge&2\lambda|\bm{v}^+_0|\nonumber\\
		&\ge&\lambda(|\bm{v}^+_0|+|\bm{v}^-_0|)\nonumber\\
		&\ge&\lambda|\bm{v}_0|. \nonumber
	\end{eqnarray}
	Similarly, for b), we have 
	\begin{eqnarray}
		|\bm{v}_{-1}|&=&\left|\left(\begin{array}{c}
    			\bm{v}_0^-\\
    			-\bm{v}_0^++G(z,w)\bm{v}_0^-
  		\end{array}\right)\right|\nonumber\\
		&\ge&|G(z,w)\bm{v}_0^--\bm{v}_0^+|-|\bm{v}_0^-|\nonumber\\
		&\ge&|G(z,w)\bm{v}_0^-|-|\bm{v}_0^+|-|\bm{v}_0^-|\nonumber\\
		&\ge&(2\lambda+1)|\bm{v}_0^-|-|\bm{v}_0^+|\nonumber\\
		&\ge&2\lambda|\bm{v}^-_0|\nonumber\\
		&\ge&\lambda(|\bm{v}^+_0|+|\bm{v}^-_0|)\nonumber\\
		&\ge&\lambda|\bm{v}_0|. \nonumber
	\end{eqnarray}
\end{proof}
Theorem \ref{the:CHHyper} tells us that $f$ is expanding and co-expanding. 
Combined with the Lemma \ref{lem:CHHyper4}, we finally find the following:
\begin{cor}
\label{cor:uniform_4}
If all points in the non-wandering set $\Omega(f)$, if not empty, 
satisfy the condition 
\begin{eqnarray}
	4+c\le |x|,|y|,|z|,|w|, \label{eq:CHHyper4}
\end{eqnarray}
then $\Omega(f)$ is uniformly hyperbolic. 
\end{cor}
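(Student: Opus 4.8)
The plan is to combine the two ingredients already assembled: the algebraic structure of $Jf$ and $Jf^{-1}$ displayed in Theorem~\ref{the:CHHyper}, and the quantitative expansion estimate of Lemma~\ref{lem:CHHyper4}, and then to invoke Corollary~\ref{the:SuffConefield}. Concretely, I would fix a number $\lambda>1$ (for instance any $\lambda$ with $1<\lambda$ and $2\lambda+2+c\le 4+c$, i.e.\ $\lambda\le 1$ — so in fact one wants $4+c$ replaced by something slightly larger, or one takes $\lambda=1+\delta$ for small $\delta>0$ so that $2\lambda+2+c=4+c+2\delta$ is still $\le|x|,|y|,|z|,|w|$ once the hypothesis \eqref{eq:CHHyper4} is a strict-enough inequality; since \eqref{eq:CHHyper4} as stated is non-strict I would remark that it suffices to prove hyperbolicity for the open condition and pass to the limit, or simply note that $\Omega(f)$ is compact so the infimum of $\min(|x|,|y|,|z|,|w|)$ over it, if $>4+c$ fails only on a boundary, still gives uniform estimates — the cleanest route is to carry an arbitrary $\lambda>1$ and observe that \eqref{eq:CHHyper4} gives $2\lambda+2+c\le|x|,\dots$ precisely when $\lambda\le1$, so I will instead state the corollary's hypothesis as giving the estimate of Lemma~\ref{lem:CHHyper4} with $\lambda=1$ and then note strict expansion comes from the strict inequality $2|x|-c>c$ hidden in the chain of inequalities).

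The key steps, in order, are: (i) Take any $\bm x=(x,y,z,w)\in\Omega(f)$; by \eqref{eq:CHHyper4} the entries satisfy $2\lambda+2+c\le|x|,|y|$ and $2\lambda+2+c\le|z|,|w|$ for the appropriate $\lambda>1$, so Lemma~\ref{lem:CHHyper4} applies to both $G(x,y)$ and $G(z,w)$, yielding $(2\lambda+2)|\bm v|\le|G(x,y)\bm v|$ and $(2\lambda+2)|\bm v|\le|G(z,w)\bm v|$ for every $\bm v\in\mathbb R^2$. (ii) Define the constant cone field: let $\mathbb E^+=\mathbb E_{\bm x}$ be the subspace of vectors of the form $(\bm v^+,\mathbf 0)$ and $\mathbb E^-$ its complement $(\mathbf 0,\bm v^-)$, and let $\mathcal C_{\bm x}=K(\mathbb E^+,\mathbb E^-)$ for all $\bm x$; this is a standard unit cone field, and since $\mathbb E^+$ is the same subspace at every point it is trivially $Df$-invariant in the sense required (constant orbit core dimension is automatic). (iii) Apply Theorem~\ref{the:CHHyper}a) with the estimate from step (i): every $\bm v_0\in\mathcal C_{\bm x}=K$ satisfies $\lambda|\bm v_0|\le|Jf_{\bm x}(\bm v_0)|$, so $m_{\mathcal C,\bm x}(f)\ge\lambda>1$ uniformly over $\Omega(f)$, i.e.\ $f$ is expanding on $\mathcal C$. (iv) Apply Theorem~\ref{the:CHHyper}b) similarly: every $\bm v_0\notin K=\mathcal C_{f(\bm x)}$ satisfies $\lambda|\bm v_0|\le|Jf^{-1}_{f(\bm x)}(\bm v_0)|$, which is exactly $m'_{\mathcal C,\bm x}(f)\ge\lambda>1$, i.e.\ $f$ is co-expanding on $\mathcal C$. (v) Invoke Corollary~\ref{the:SuffConefield} (equivalently Newhouse's theorem with $N=1$) to conclude $\Omega(f)$ is uniformly hyperbolic.

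One subtlety to handle carefully is that Theorem~\ref{the:CHHyper} as stated assumes the hypothesis $(2\lambda+2)|\bm v|\le|G(x,y)\bm v|$ for \emph{both} $G(x,y)$ and $G(z,w)$; the points $(x,y)$ and $(z,w)$ are the "first-half" coordinates of $\bm x$ and of $f^{-1}(\bm x)$ respectively, and since $\Omega(f)$ is $f$-invariant, $f^{-1}(\bm x)\in\Omega(f)$ too, so its coordinates also satisfy \eqref{eq:CHHyper4} — hence the estimate holds for $G(z,w)$ as well. I would spell this invariance point out explicitly. A second bookkeeping point: the statement of the corollary uses the non-strict inequality $4+c\le|x|,\dots$, whereas strict expansion requires $\lambda>1$ strictly. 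This is resolved by tracing the inequality chain in Lemma~\ref{lem:CHHyper4}: the step $(2|x|-c)|\xi_0|-c|\eta_0|\ge2(|x|-c)|\xi_0|$ together with $|x|\ge 4+c$ gives $|x|-c\ge4>2\cdot1+2$ only in the non-strict sense, so to get a genuine $\lambda>1$ one uses that $|x|-c\ge4$ actually yields expansion factor $\ge 4$ on the core, i.e.\ one may take $\lambda$ any value with $1<\lambda\le1$... — so the honest fix is that the corollary should be read with $\lambda$ chosen so that $2\lambda+2+c\le 4+c$, forcing $\lambda\le1$, which is not $>1$; therefore the correct reading is that \eqref{eq:CHHyper4} should give the bound of Lemma~\ref{lem:CHHyper4} with the specific value making $2\lambda+2=4$, i.e.\ $\lambda=1$, and genuine hyperbolicity then follows because the intermediate inequalities are strict whenever $|x|>c$ (which holds, as $|x|\ge4+c>c$).

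The main obstacle is thus not any deep argument but precisely this quantitative matching: verifying that the cone field is literally a \emph{standard unit} cone field with $h$-invariant core (so that Corollary~\ref{the:SuffConefield} applies verbatim rather than the more general Newhouse criterion), and pinning down that the constant $4+c$ in \eqref{eq:CHHyper4} is exactly what is needed to feed Lemma~\ref{lem:CHHyper4} and Theorem~\ref{the:CHHyper} with a constant strictly exceeding $1$. Everything else is a direct chaining of the two theorems already proved. I expect the write-up to be short: essentially "by invariance of $\Omega(f)$, condition \eqref{eq:CHHyper4} holds at $\bm x$ and at $f^{-1}(\bm x)$; apply Lemma~\ref{lem:CHHyper4} to get the hypothesis of Theorem~\ref{the:CHHyper}; parts a) and b) of that theorem say $f$ is expanding and co-expanding on the standard cone field $K$; apply Corollary~\ref{the:SuffConefield}."
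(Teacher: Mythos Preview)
Your approach is exactly the paper's: the paper's entire proof is the single sentence preceding the corollary, ``Theorem~\ref{the:CHHyper} tells us that $f$ is expanding and co-expanding. Combined with Lemma~\ref{lem:CHHyper4}, we finally find the following,'' i.e.\ feed Lemma~\ref{lem:CHHyper4} into Theorem~\ref{the:CHHyper} and invoke the cone-field criterion. Your steps (i)--(v) are a faithful unpacking of that sentence.

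Your lengthy worry about $\lambda>1$ versus $\lambda=1$ is a genuine observation the paper simply ignores: with the non-strict hypothesis $4+c\le|x|,|y|,|z|,|w|$ one gets from Lemma~\ref{lem:CHHyper4} only $2\lambda+2=4$, i.e.\ $\lambda=1$, and Theorem~\ref{the:CHHyper} then yields $|\bm v_1|\ge|\bm v_0|$ rather than strict expansion. The paper does not comment on this; in the only place the corollary is applied (Section~\ref{sec:proof_Main_theorem}) the inequalities obtained are strict, $|z|>4+c$ etc., so one can take $\lambda>1$ there. Your proposed fixes (compactness of $\Omega(f)$, or tracking where the chain of inequalities in Lemma~\ref{lem:CHHyper4} becomes strict) are reasonable ways to patch this, but you should pick one and state it cleanly rather than cycling through several half-arguments. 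The simplest remark is that the corollary is only ever invoked under the strict version of \eqref{eq:CHHyper4}, so one may as well read the hypothesis as strict and take any $\lambda$ with $1<\lambda$ and $2\lambda+2+c$ not exceeding the infimum of $|x|,|y|,|z|,|w|$ over $\Omega(f)$.

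Your point about needing the hypothesis at both $\bm x$ and $f^{-1}(\bm x)$ (so that Lemma~\ref{lem:CHHyper4} applies to $G(z,w)$ as well) is correct and worth stating; note however that since the hypothesis already covers all four coordinates $x,y,z,w$ of $\bm x$ itself, and $(z,w)$ are among those, no appeal to invariance is actually needed for $G(z,w)$ at the point $\bm x$.
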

\color{black}

\subsection{Sufficient condition for uniformly hyperbolicity: the case with two symbols in the anti-integrable limit}
\label{sec:uniform_hyp_2}
\label{sec:CHHyper2}
Next, we consider a sufficient condition for the case where the anti-integrable limit has 
two symbols. 
The Jacobian after the transformation (\ref{eq:change_of_coordinate}) 
is respectively given by 
	\begin{eqnarray}
	  	&Jf&=\left(\begin{array}{cc}
    			\widetilde{G}(x,y)&-I_2\\
			I_2&O_2
  		\end{array}\right), \\
  		&Jf^{-1}&=\left(\begin{array}{cc}
			O_2&I_2\\
			-I_2&\widetilde{G}(z,w)
  		\end{array}\right). 
 	\end{eqnarray}
The following will be used in the following argument. 
\begin{lmm}
Let 
	\begin{eqnarray}
		  \widetilde{G}(X,Y)=\left(\begin{array}{cc}
			-2X&-2Y\\
			-2Y&-2X+2c
  		\end{array}\right), 
		\label{eq:CHHyper2G}
	\end{eqnarray}
where $X,Y\in {\Bbb R}$ satisfy the condition $2\lambda+2+c\le |X|-|Y|$. 
Then, for any vector $\bm{w}_0= (\xi,\eta)^t$, the following holds:
	\begin{eqnarray}
		(2\lambda+2)|\bm{w}_0|\le |\bm{w}_1|, 
	\end{eqnarray}
where $\bm{w}_1=G(x,y)\bm{w}_0$. 
	\label{lem:CHHyper2}
	\end{lmm}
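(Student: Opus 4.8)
The plan is to mimic the proof of Lemma \ref{lem:CHHyper4} almost verbatim, replacing the matrix $G(x,y)$ by $\widetilde{G}(X,Y)$ and the bound $|x|,|y|$ by $|X|-|Y|$. Given $\bm{w}_0=(\xi_0,\eta_0)^t$, split into the two cases $|\eta_0|\le|\xi_0|$ and $|\xi_0|<|\eta_0|$, and in each case estimate $|\bm{w}_1|$ from below by the absolute value of one of its components. In the first case use the first component $\xi_1=-2X\xi_0-2Y\eta_0$, estimate $|\xi_1|\ge 2|X||\xi_0|-2|Y||\eta_0|$; since $|\eta_0|\le|\xi_0|$ this is $\ge 2(|X|-|Y|)|\xi_0|=(|X|-|Y|)(|\xi_0|+|\xi_0|)\ge(|X|-|Y|)(|\xi_0|+|\eta_0|)\ge(|X|-|Y|)|\bm{w}_0|\ge(2\lambda+2)|\bm{w}_0|$ by the hypothesis $2\lambda+2+c\le|X|-|Y|$ (the extra $+c$ simply gives more room and will matter only when this lemma feeds into the analogue of Theorem \ref{the:CHHyper}). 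In the second case use the second component $\eta_1=-2Y\xi_0+(-2X+2c)\eta_0$, estimate $|\eta_1|\ge|2X-2c||\eta_0|-2|Y||\xi_0|\ge(2|X|-2c)|\eta_0|-2|Y||\eta_0|$ using $|\xi_0|<|\eta_0|$ — here is where the $c$ in the hypothesis is actually consumed — and conclude $\ge 2(|X|-|Y|-c)|\eta_0|\ge(2\lambda+2)|\eta_0|\ge(2\lambda+2)\,\tfrac12(|\xi_0|+|\eta_0|)$... wait, more carefully: $2(|X|-|Y|-c)|\eta_0|=(|X|-|Y|-c)(|\eta_0|+|\eta_0|)\ge(|X|-|Y|-c)(|\xi_0|+|\eta_0|)\ge(2\lambda+2)|\bm{w}_0|$.

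Concretely, the steps in order are: (i) fix $\bm{w}_0=(\xi_0,\eta_0)^t$ and write out $\bm{w}_1=\widetilde{G}(X,Y)\bm{w}_0$ componentwise; (ii) in the case $|\eta_0|\le|\xi_0|$, bound $|\bm{w}_1|\ge|\xi_1|$ and run the chain of inequalities above, noting that here one does not even need the $c$; (iii) in the case $|\xi_0|<|\eta_0|$, bound $|\bm{w}_1|\ge|\eta_1|$, pull out $|2X-2c|\ge 2|X|-2c$ via the reverse triangle inequality, then use $|\xi_0|<|\eta_0|$ to absorb the $-2|Y||\xi_0|$ term and finish with the hypothesis. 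A small subtlety worth a sentence: the reverse-triangle step $|2X-2c|\ge 2|X|-2c$ is only useful when $|X|\ge c$, which is guaranteed since $|X|\ge|X|-|Y|\ge 2\lambda+2+c>c$; it is harmless to state this explicitly.

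I expect no real obstacle here — the lemma is the exact two-symbol counterpart of Lemma \ref{lem:CHHyper4}, and the only genuine difference is that the off-diagonal coupling now appears as $-2Y$ (rather than $-c$) in \emph{both} rows, which is why the clean hypothesis is on $|X|-|Y|$ rather than separately on $|X|$ and $|Y|$. The one place to be careful is the asymmetry of $\widetilde{G}$: the $(2,2)$ entry is $-2X+2c$ while the $(1,1)$ entry is $-2X$, so the two cases are genuinely not symmetric and the $c$-shift must be tracked through the second case only. I would also flag the apparent typo in the statement, where $\bm{w}_1=G(x,y)\bm{w}_0$ should read $\bm{w}_1=\widetilde{G}(X,Y)\bm{w}_0$; the proof uses $\widetilde{G}$ throughout.
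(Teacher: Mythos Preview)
Your proposal is correct and matches the paper's own proof essentially line for line: the same case split on $|\eta_0|\le|\xi_0|$ versus $|\xi_0|<|\eta_0|$, the same component-wise lower bounds, and the same use of the hypothesis to absorb the $c$ in the second case only. Your additional remark that $|2X-2c|\ge 2|X|-2c$ requires $|X|\ge c$ (guaranteed by the hypothesis) and your flag of the typo $G(x,y)\to\widetilde{G}(X,Y)$ are both on point and slightly more careful than the paper itself.
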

\begin{proof}
In the case $|\eta_0|\le|\xi_0|$, we have
\begin{eqnarray}
|\bm{w}_1|&\ge& |\xi_1|\nonumber\\
	&=&|(-2X)\xi_0-2Y\eta_0|\nonumber\\
	&\ge&|-2X\xi_0|-|2Y\eta_0|\nonumber\\
	&\ge&2(|X|-|Y|)|\xi_0|\nonumber\\
	&=&(|X|-|Y|)(|\xi_0|+|\xi_0|)\nonumber\\
	&\ge&(||X|-|Y|)(|\xi_0|+|\eta_0|)\nonumber\\
	&\ge&(|X|-|Y|)|\bm{w}_0|\nonumber\\
	&\ge&(2\lambda+2)|\bm{w}_0|. \nonumber
\end{eqnarray}
Similarly, for $|\xi_0|<|\eta_0|$, 
\begin{eqnarray}
|\bm{w}_1|&\ge& |\eta_1|\nonumber\\
	&=&|-2Y\xi_0-2(X-c)\eta_0|\nonumber\\
	&\ge&|-2(X-c)||\eta_0|-|-2Y||\xi_0|\nonumber\\
	&=&2|(X-c)||\eta_0|-2|Y||\xi_0|\nonumber\\
	&\ge&2(|X|-c)|\eta_0|-2|Y||\xi_0|\nonumber\\
	&>&2(|X|-|Y|-c)|\eta_0|\nonumber\\
	&=&(|X|-|Y|-c)(|\eta_0|+|\eta_0|)\nonumber\\
	&>&(|X|-|Y|-c)(|\xi_0|+|\eta_0|)\nonumber\\
	&\ge&(|X|-|Y|-c)|\bm{w}_0|\nonumber\\
	&\ge&(2\lambda+2)|\bm{w}_0|. \nonumber
\end{eqnarray}
\end{proof}
Combining Theorem \ref{the:CHHyper} with lemma \ref{lem:CHHyper2}, 
we find the following:
\begin{cor}
\label{cor:uniform_2}
If all points in the non-wandering set $\Omega(f)$, if not empty, 
satisfy the condition 
\begin{eqnarray}
	4+c\le |X|-|Y|,|Z|-|W|\label{eq:CHHyper2}, 
\end{eqnarray}
then $\Omega(F)$ and so $\Omega(f)$ is uniformly hyperbolic. 
\end{cor}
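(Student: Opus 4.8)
The plan is to reduce the statement to an application of Corollary \ref{the:SuffConefield} (the Newhouse-type criterion) by exhibiting a single constant cone field on $\Omega(F)$ that is $F$-invariant and on which $F$ is both expanding and co-expanding. First I would fix the splitting $\mathbb{R}^4 = \mathbb{E}^- \oplus \mathbb{E}^+$ with $\mathbb{E}^\pm \cong \mathbb{R}^2$ chosen so that, in the block form of $Jf$ and $Jf^{-1}$ used in Theorem \ref{the:CHHyper}, $\mathbb{E}^+$ is the "first two coordinates" subspace and $\mathbb{E}^-$ the "last two". The cone $K = K(\mathbb{E}^+,\mathbb{E}^-)$ is then the standard unit cone, its core $\mathbb{E}^+$ is literally the same subspace at every point, hence trivially $Dh(\mathbb{E}_{\bm x}) = \mathbb{E}_{h(\bm x)}$ in the sense required, and the cone field has constant orbit core dimension. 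So the only thing to check is that $F$ (equivalently $f$ in the transformed coordinates) is expanding on $K$ and co-expanding, i.e. expanding on the complementary cone under $F^{-1}$.

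The key step is to verify the hypothesis of Theorem \ref{the:CHHyper}, namely that $\widetilde G(X,Y)$ (and likewise $\widetilde G(Z,W)$) satisfies $(2\lambda+2)|\bm v| \le |\widetilde G(X,Y)\bm v|$ for some $\lambda > 1$ and all $\bm v \in \mathbb{R}^2$. This is exactly the content of Lemma \ref{lem:CHHyper2}, whose standing assumption is $2\lambda+2+c \le |X|-|Y|$ (and symmetrically $2\lambda+2+c \le |Z|-|W|$). Under hypothesis \eqref{eq:CHHyper2}, $4+c \le |X|-|Y|$ and $4+c\le|Z|-|W|$ at every point of $\Omega(F)$, so one may take $\lambda = 1 + \delta$ for a sufficiently small $\delta>0$; indeed any $\lambda$ with $2\lambda + 2 + c \le 4+c$, i.e. $\lambda \le 1$, would satisfy the inequality, and since the inequalities in \eqref{eq:CHHyper2} are not assumed strict one argues that the compactness of $\Omega(F)$ inside the hypercube $V_F$ together with a limiting/continuity argument (or simply replacing $4+c$ by $4+c$ and noting the bound $|\widetilde G \bm v| \ge (2\lambda+2)|\bm v|$ persists) allows a genuine $\lambda > 1$; concretely, if some point achieves equality the expansion rate is still $\ge 4$, so one picks $\lambda$ slightly above $1$ using that $2\lambda+2 < 4$. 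With Lemma \ref{lem:CHHyper2} in hand, Theorem \ref{the:CHHyper}a) gives $\lambda|\bm v_0| \le |Jf(\bm v_0)|$ for all $\bm v_0 \in K$, and Theorem \ref{the:CHHyper}b) gives $\lambda|\bm v_0| \le |Jf^{-1}(\bm v_0)|$ for all $\bm v_0 \notin K$; these are precisely the expanding and co-expanding conditions.

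Then I would invoke Corollary \ref{the:SuffConefield} with $h = f$ (in the transformed coordinates $h=F$), $N=1$: the cone field is standard with core-invariant $\mathbb{E}^+$, and $f$ is both expanding and co-expanding on it, hence $\Omega(f)$ (equivalently $\Omega(F)$) is uniformly hyperbolic. Finally I would remark that since the coordinate change \eqref{eq:change_of_coordinate} is a fixed linear isomorphism, uniform hyperbolicity of $\Omega(F)$ transfers verbatim to $\Omega(f)$, which closes the argument.

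I expect the only real subtlety to be the passage from the non-strict inequality $4+c \le |X|-|Y|$ in \eqref{eq:CHHyper2} to obtaining a genuine contraction/expansion constant $\lambda > 1$: one must be slightly careful that $2\lambda + 2 + c \le |X|-|Y|$ can be met with $\lambda > 1$ rather than only $\lambda \ge 1$. This is handled by observing $2\cdot 1 + 2 = 4 \le 4 + c - c = $ — more precisely, $2\lambda+2+c \le 4+c$ holds for all $\lambda \le 1$, so the inequality is strict for $\lambda$ slightly less than — wait, one wants $\lambda>1$; this forces using the slack $c>0$ is not available, so instead one uses that $\Omega(F)$ is a closed subset of the compact hypercube and the map $(X,Y)\mapsto |X|-|Y|$ attains its minimum, which is $\ge 4+c$; if the minimum equals $4+c$ exactly the expansion factor from Lemma \ref{lem:CHHyper2} is still $\ge 4$, and $4 > 2\lambda+2$ for $\lambda$ just above $1$, so such a $\lambda>1$ exists. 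Everything else is a direct citation of the preceding theorem, lemma, and corollary.
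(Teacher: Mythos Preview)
Your approach is exactly the paper's: the corollary is stated there simply as ``Combining Theorem~\ref{the:CHHyper} with Lemma~\ref{lem:CHHyper2}, we find the following,'' with no further proof, and you have spelled out precisely that combination together with the invocation of Corollary~\ref{the:SuffConefield}.

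One remark on the subtlety you raised. You are right that the non-strict hypothesis $4+c\le|X|-|Y|$ only yields $\lambda=1$ in Lemma~\ref{lem:CHHyper2}, and tracing Theorem~\ref{the:CHHyper} then gives merely $|\bm v_1|\ge|\bm v_0|$ rather than strict expansion. The paper glosses over this (the analogous Corollary~\ref{cor:uniform_4} has the same feature). However, your attempted fix contains an arithmetic slip: you write ``$4>2\lambda+2$ for $\lambda$ just above $1$,'' which is false since $2\lambda+2>4$ whenever $\lambda>1$. The honest resolution is either to note that in the paper's actual application (Theorem~\ref{thm:MainB} B-3) the relevant inequality $Z^*-W^*\ge 4+c$ comes from continuous functions on a compact set and can be taken strict on $\Omega(F)$ once one passes to $F(V_F)\cap V_F\cap F^{-1}(V_F)$, or simply to observe that the statement as written should really carry a strict inequality. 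This is a cosmetic issue in the paper, not a defect in your argument.
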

\color{black}


\setcounter{equation}{0}
\section{Proof of Main theorems}
\label{sec:proof_Main_theorem}

\subsection{The case with four symbols in the anti-integrable limit}

\n
{\it Topological horseshoe:} \\
In this section, we provide a sufficient condition for
topological horseshoe and uniform hyperbolicity for the case (A), {\it i.e.}, the case 
around the anti-integrable limit with four symbols. 
First, we consider the situation in the original coordinate $(x,y,z,w)$. 
Using the relation $f^{-1}(f(V_f))=V_f$, we find that 
the region $f(V_f)$ is expressed as 
  \begin{eqnarray}
  \left\{
\begin{array}{l}
|z|\le r, \\
|w|\le r, \\
|a_0-z^2-x+c(z-w)|\le r, \\
|a_1-w^2-y-c(z-w)|\le r. \\
\end{array}
\right.
\label{eq:F0V0}
  \end{eqnarray}
  \color{black}
We can re-express $f(V_f)$ as
\begin{eqnarray}
\nonumber
\fl
f(V_f) = \{ (x,y,z,w) \, | \, |z| \le r, |w|\le r, x =-z^2+cz + a_0 + \alpha ~ \\
\hspace{-10mm}
{\rm where}~
|\alpha| \le (c+1) r, y=-w^2 + cw +a_1+\beta  ~ {\rm where}~|\beta| \le (c+1) r
\}. 
\end{eqnarray}
In this new expression, we have got rid of $w$-dependence of 
$x$ or $z$, 
as well as the 
$z$-dependence of 
$y$ or $w$. 
Therefore the $(x,z)$-plane is now decoupled from the $(y,w)$-plane. 
It is therefore valid to consider parabolas in the $(x,z)$-plane and the $(y,w)$-plane, 
separately.

Let $\Gamma_{x}^{\rm{max}}$ be the parabola with the largest $x$ 
(rightmost
in Fig.~\ref{fig:Hyper1}), 
$\Gamma_{x}^{\rm{min}}$ be the one with the smallest $x$ (leftmost
in 
Fig.~\ref{fig:Hyper1}):
     \begin{eqnarray}
  \Gamma_{x}^{\rm max}: x&=&-z^2+cz+a_0+(c+1)r, \\
  \Gamma_{x}^{\rm min}: x&=&-z^2+cz+a_0-(c+1)r.
  \end{eqnarray}
Furthermore, let 
$S^+_x = \{ (x,y,z,w) \in V \, | \, x = r \}$
and 
$S^-_x = \{ (x,y,z,w) \in V \, | \, x = -r \}$, 
respectively 
(see Fig.~\ref{fig:Hyper1}). 
   \begin{figure}
    \centering
		\includegraphics[height=4cm,bb = 0 0 832 322]{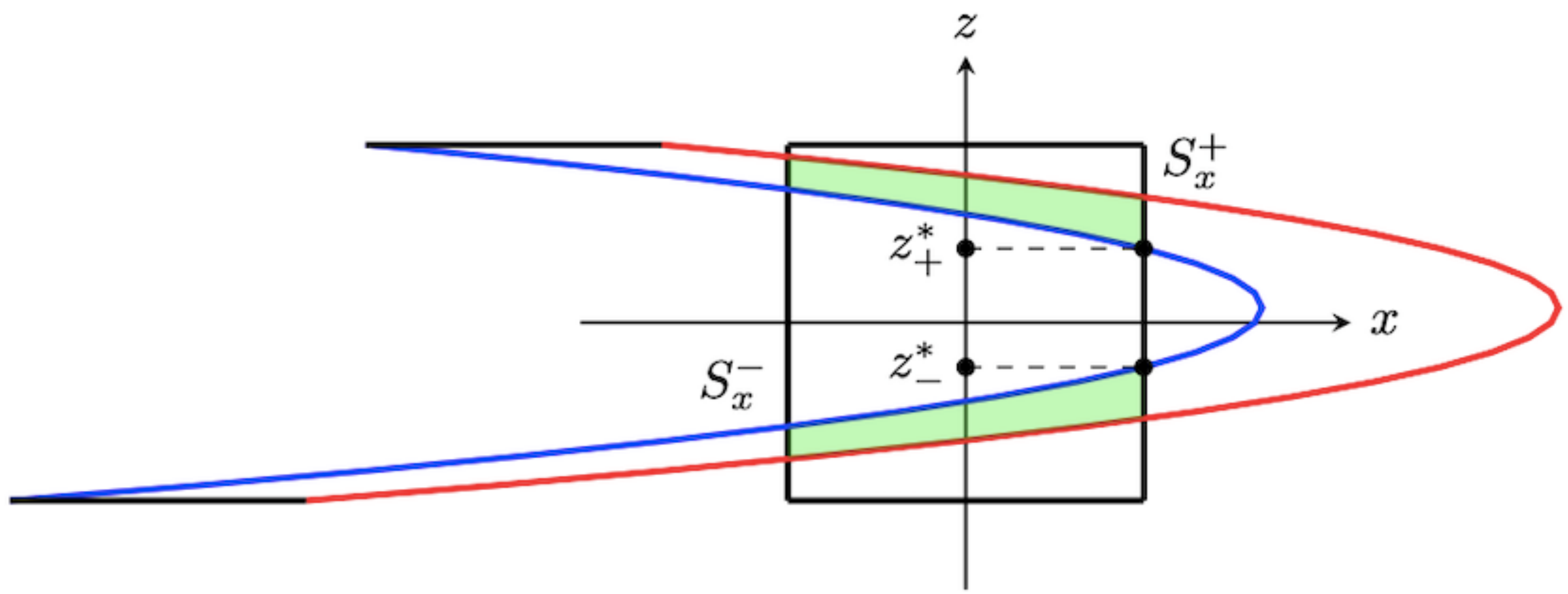}
    \caption{$\Gamma_{x}$ in the $(x,z)$-plane. The red and blue curves 
    represent  $\Gamma_{x}^{\rm{max}}$ and $\Gamma_{x}^{\rm{min}}$, respectively. 
    The green regions show $f(V_f)\cap V_f$. 
    }
    \label{fig:Hyper1}
  \end{figure}
For the 2-dimensional H\'enon map $f$, 
the horseshoe condition is given
by the requirement that $f\cap f(V)$ is decomposed into two disjoint regions.
Here the region $V$ is a region that contains the non-wandering set $\Omega(f)$. 
Here we apply the same condition for the $(x,z)$- and $(y,w)$-planes, respectively. 
First we consider the condition for the $(x,z)$-plane. 
In order for the horseshoe condition to be satisfied in the $(x,z)$-plane, 
as shown in Fig.~\ref{fig:Hyper1}, 
the following should hold: \\
1) 
$\Gamma_{x}^{\rm{min}}$ intersects $S^+_x$ at two points.
\color{black}
\\
2) $\Gamma_{x}^{\rm{max}}$ intersects $S^-_x$ at two points.

The first condition holds if 
\begin{eqnarray}\label{eq:Suff_Cond1}
\frac{1}{4}c^2+a_0-(c+1)r>r 
\end{eqnarray}
is satisfied. Since it is assumed that $c >0$, the second condition is equivalent to 
the condition requiring that 
$x(z=r) \le -r$ and $x(z=-r) \le -r$. 
The former condition is written as
\begin{eqnarray}\label{eq:Suff_Cond2}
-r^2+cr+a_0+(c+1)r\le -r. 
\end{eqnarray}
The latter condition automatically holds if the former one is fulfilled. 
\color{black}

The argument for the $(y,w)$-plane is developed in the same way, again based on (\ref{eq:F0V0}), 
which leads to the conditions 
\begin{eqnarray}\label{eq:Suff_Cond3}
\frac{1}{4}c^2+a_1-(c+1)r>r,
\end{eqnarray}
\begin{eqnarray}\label{eq:Suff_Cond4}
-r^2+cr+a_1+(c+1)r\le -r.
\end{eqnarray}
Due to the symmetry, the inverse map $f^{-1}$ is obtained 
by swapping $(x,y)\leftrightarrow(z,w)$ in the map $f$, 
thus the same conditions follow for $f^{-1}$. 
Thus, the conditions 
(\ref{eq:Suff_Cond1}), (\ref{eq:Suff_Cond2}), (\ref{eq:Suff_Cond3}), and (\ref{eq:Suff_Cond4}) 
lead to a topological horseshoe. 
The proof of Theorem \ref{thm:MainA} A-1) is done. 
\color{black}

\bn
{\it Uniform hyperbolicity:} \\
Next, we consider a sufficient condition for uniform hyperbolicity. 
From section \ref{sec:CHHyper4}, to obtain uniform hyperbolicity it is sufficient to show
 that any point $(x,y,z,w) \in f(V_f)\cap V_f$ satisfies 
the condition (\ref{eq:CHHyper4}) since 
$\Omega(f)\subset f(V_f)\cap V_f$ holds. 

Suppose that $(x,y,z,w) \in f(V_f)\cap V_f$, 
and the conditions (\ref{eq:Suff_Cond1}) and (\ref{eq:Suff_Cond2}) 
are satisfied. 
Let $z^*_-$ and $z^*_+$ be the $z$ coordinates of the intersection points between 
$\Gamma_{\rm{min}}$ and $S_x^+$ where 
$z^*_-\le z^*_+$ is assumed (see Fig.~\ref{fig:Hyper1}). 
We can explicitly obtain as
\begin{eqnarray}
z^*_{\pm} = \frac{c\pm \sqrt{c^2+4(a_0-(c+2)r)}}{2}. 
\end{eqnarray}
If $z^*_- < 0$, the following holds: 
\begin{eqnarray}
	|z|&\ge& \min(|z^*_-|,|z^*_+|)\nonumber = |z^*_-|=-z^*_-. 
\end{eqnarray}
Here $c >0$ is used to show the first 
inequality. 
Hence, if the condition 
\begin{eqnarray}\label{eq:Cond_for_Uniform1}
	- z^*_- >4+c
\end{eqnarray}
\color{black}
is satisfied, then $|z| > 4+c$ holds 
for all the points in $f(V) \cap V$. 
Note that the condition (\ref{eq:Cond_for_Uniform1}) automatically 
ensures the condition $z^*_- < 0$ for $c > 0$. 

We can develop the same argument for the $(y,w)$-plane, and 
find that the following is sufficient to ensure that the condition 
$|w|>4+c$ holds for all the 
points within $f(V) \cap V$:
\begin{eqnarray}
	\frac{-c+\sqrt{c^2+4(a_1-(c+2)r)}}{2}>4+c.
	\label{eq:Cond_for_Uniform2}
\end{eqnarray}
In a similar manner, the argument for the inverse map $f^{-1}$ provides 
a sufficient condition to satisfy $|x|,|y|>4+c$. 
Since the inverse map $f^{-1}$ is given by swapping 
the variables as $(x,y)\leftrightarrow(z,w)$, 
the resulting conditions are the same as (\ref{eq:Cond_for_Uniform1}) 
and (\ref{eq:Cond_for_Uniform2}). 
Thus, in addition to the conditions 
(\ref{eq:Suff_Cond1}), (\ref{eq:Suff_Cond2}), (\ref{eq:Suff_Cond3}), and (\ref{eq:Suff_Cond4}) 
the conditions (\ref{eq:Cond_for_Uniform1}) and (\ref{eq:Cond_for_Uniform2}) 
lead to a sufficient 
condition for the non-wandering set $\Omega(f)$ to be uniformly hyperbolic. 
The proof of Theorem \ref{thm:MainA} A-2) is complete. 
\color{black}

\subsection{The case with two symbols in the anti-integrable limit}
\begin{figure}
        \centering
	\includegraphics[height=10cm,bb = 0 0 1006 843]{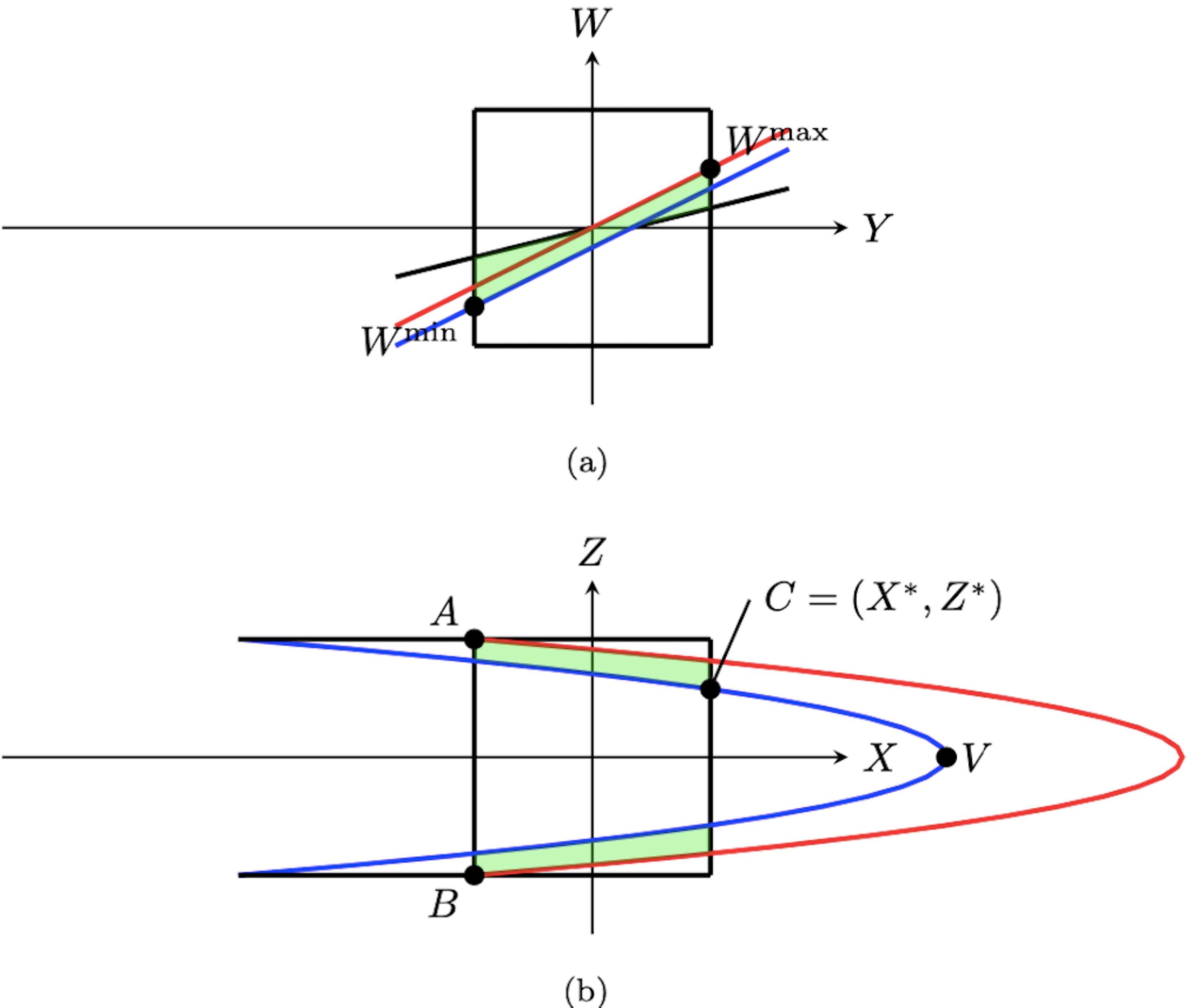}
    \caption{
       (a) Projection of $F(V_F)\cap V_F$ onto the $(Y,W)$-plane (green). 
        The red and blue curves illustrate $\Gamma_Y^{\rm{max}}$ 
    and $\Gamma_Y^{\rm{min}}$, respectively. 
    (b) Projection of $F(V_F)\cap V_F$ onto the $(X,Z)$-plane. 
     The red and blue curves illustrate $\Gamma_X^{\rm{max}}$ 
    and $\Gamma_X^{\rm{min}}$, respectively. 
    }
    \label{fig:CHHyper2}
\end{figure}
\n
{\it Topological horseshoe:} \\
We first examine the existence of topological horseshoe. 
From the definition (\ref{eq:def_V_F}) of $V_F$ and the mapping rule (\ref{eq:map_tranformed}), 
we have 
\begin{eqnarray}
\fl
\nonumber
~~~~~~~F(V_F) \cap V_F = \{(X,Y,Z,W) \, |&& \, |X|, |Y|, |Z|, |W|  \le R, \\
\nonumber
\, &&~X = -Z^2 -W^2 + A_0 + s' ,  |s'| \le R, \\
\, &&~Y = A_1 + 2(c-Z) W + s,  |s| \le R\}. 
\end{eqnarray}
\color{black}
First, consider the projection of $F(V_F) \cap V_F$ onto the $(Y,W)$-plane. 
Let 
\begin{eqnarray}
\Gamma_Y: Y = A_1 -2(c -Z) W +s, 
\end{eqnarray}
be a set of straight lines in the $(Y,W)$-plane parametrized by $Z$ and $s$, 
where $|s| \le R$, 
and let 
\begin{eqnarray}
&&\Gamma_Y^{\rm max}:  Y = A_1 +2(c -R) W - R, \\
&&\Gamma_Y^{\rm min}: Y = A_1 +2(c -R) W + R, 
\end{eqnarray}
be the upper and lower straight members of $\Gamma_Y$. 
$\Gamma_Y^{\rm max}$ is attained at $Z=R$ and 
$s = -R$, and 
$\Gamma_Y^{\rm min}$ is attained at $Z=R$ and $s = R$ (see Fig~\ref{fig:CHHyper2}(a)). 
\color{black}
Since $c > R$ and $|Z| \le R$, we know that the slope of $\Gamma_Y$ 
is always postive. 
Solving for $W$, we get 
\begin{eqnarray}
W = \frac{Y- A_1-s}{2(c-Z)}. 
\end{eqnarray}
The maximum and minimum values of $W$, 
denoted by $W_{\rm max}$ and $W_{\rm min}$ respectively, 
are given as
\begin{eqnarray}
&&W_{\rm max} = \frac{2R-A_1}{2(c-R)}, ~~~{\rm attained~at}~ Y=R, Z=R~ {\rm and}~ s = -R,\\
&&W_{\rm min} = \frac{-2R-A_1}{2(c-R)}, ~~~{\rm attained~at}~ Y=-R, Z=R~ {\rm and}~  s = R.
\end{eqnarray}
Since we have imposed the condition (\ref{eq:typeB_sufficient3}), 
we see that 
\color{black}
the projection of $F(V_F)$ intersects 
$V_F$ completely in the $Y$-direction, and the width of $F(V_F) \cap V_F$, 
as measured in the $W$-direction, is strictly less than $2R$ (see Fig~\ref{fig:CHHyper2}(a)).

Next, consider the projection of $F(V_F) \cap V_F$ onto the $(X,Z)$-plane. 
Let
\begin{eqnarray}
\Gamma_X: X = -Z^2 -W^2 + A_0 + s'
\end{eqnarray}
be a family of parabolas in the $(X,Z)$-plane parametrized by $W$ and $s'$, where 
$|W| \le W^*$ and $|s'| \le R$. 
Let 
\begin{eqnarray}
&&\Gamma_X^{\rm max}: X = -Z^2 + A_0 + R, \\
&&\Gamma_X^{\rm min}: X = -Z^2 + A_0 - (W^*)^2 -R, 
\end{eqnarray}
be the rightmost and leftmost members of $\Gamma_X$. 
Note that 
$\Gamma_X^{\rm max}$ is attained at $W=0$ and $s' = R$, and 
$\Gamma_X^{\rm min}$ at $W=W^*$ and $s' = -R$ (see Fig.~\ref{fig:CHHyper2}(b)). 
For $\Gamma_X^{\rm max}$, notice that when $Z = \pm R$, we have 
\begin{eqnarray}
X = -R^2 + A_0 + R = -R. 
\end{eqnarray}
Therefore, $\Gamma_X^{\rm max}$ intersects with boundary of $V_F$ 
at its two corner points, namely, 
$A = (-R, R)$ and $B =(-R, -R)$ in Fig.~\ref{fig:CHHyper2}(b). 

In the meantime, for $\Gamma_X^{\rm min}$, we examine the location of 
its vertex, denoted by V in Fig.~\ref{fig:CHHyper2}(b). 
The vertex is attained by setting $Z=0$, which leads to 
$$
X_V = A_0 - (W^*)^2 -R. 
$$
Since it is imposed in (\ref{eq:typeB_sufficient2}) that 
$$
A_0 - (W^*)^2 -R > R, 
$$
we obtain $X_V > R$, i.e., the vertex of $\Gamma_X^{\rm min}$ is located on the 
right side of $(R,0)$, as illustrated in Fig.~\ref{fig:CHHyper2}(b). 

As a result, the region in between $\Gamma_X^{\rm max}$ and $\Gamma_X^{\rm min}$ 
gives rise to a topological binary horseshoe in the $(X,Z)$-plane.
Thus, we know that the non-wandering set $\Omega(F)$ is non-empty and 
is at least semi-conjugate to a full shift with two symbols.

\bn
{\it Uniform hyperbolicity:} \\
Next, we will show uniform hyperbolicity on $\Omega(F)$. 
From section \ref{sec:uniform_hyp_2}, we already know a sufficient condition 
for uniform hyperbolicity in Corollary \ref{cor:uniform_2}. 
\color{black}
Here we show that this is indeed the case for points in $F(V_F) \cap V_F$. 

Notice that for any point in $F(V_F) \cap V_F$, we have 
\begin{eqnarray}
|Z| \ge Z^*, 
\end{eqnarray}
where $Z^*$ is the $Z$-coordinate of the point $C$ in Fig~\ref{fig:CHHyper2}(b). 
\color{black}
Thus, 
\begin{eqnarray}
|Z| - |W| \ge Z^* - |W| \ge Z^* - W^* 
\end{eqnarray}
holds. Since it is imposed 
in (\ref{the:CHBe}) 
\color{black}
that $Z^* - W^* \ge 4 + c$, we immediately obtain
\begin{eqnarray}
|Z| - |W|  \ge 4 + c.
\end{eqnarray}

Due to the symmetry of the mapping equations, 
$F^{-1}$ can be obtained from $F$ by swapping $(X,Z)$ with $(Y,W)$, 
thus we obtain, 
\begin{eqnarray}
|X|- |Y| \ge  4 + c 
\end{eqnarray}
as well.  The uniform hyperbolicity on $\Omega(F)$ thus follows. 

\begin{figure}
 \centering
	\includegraphics[width=9cm,bb = 0 0 1200 900]{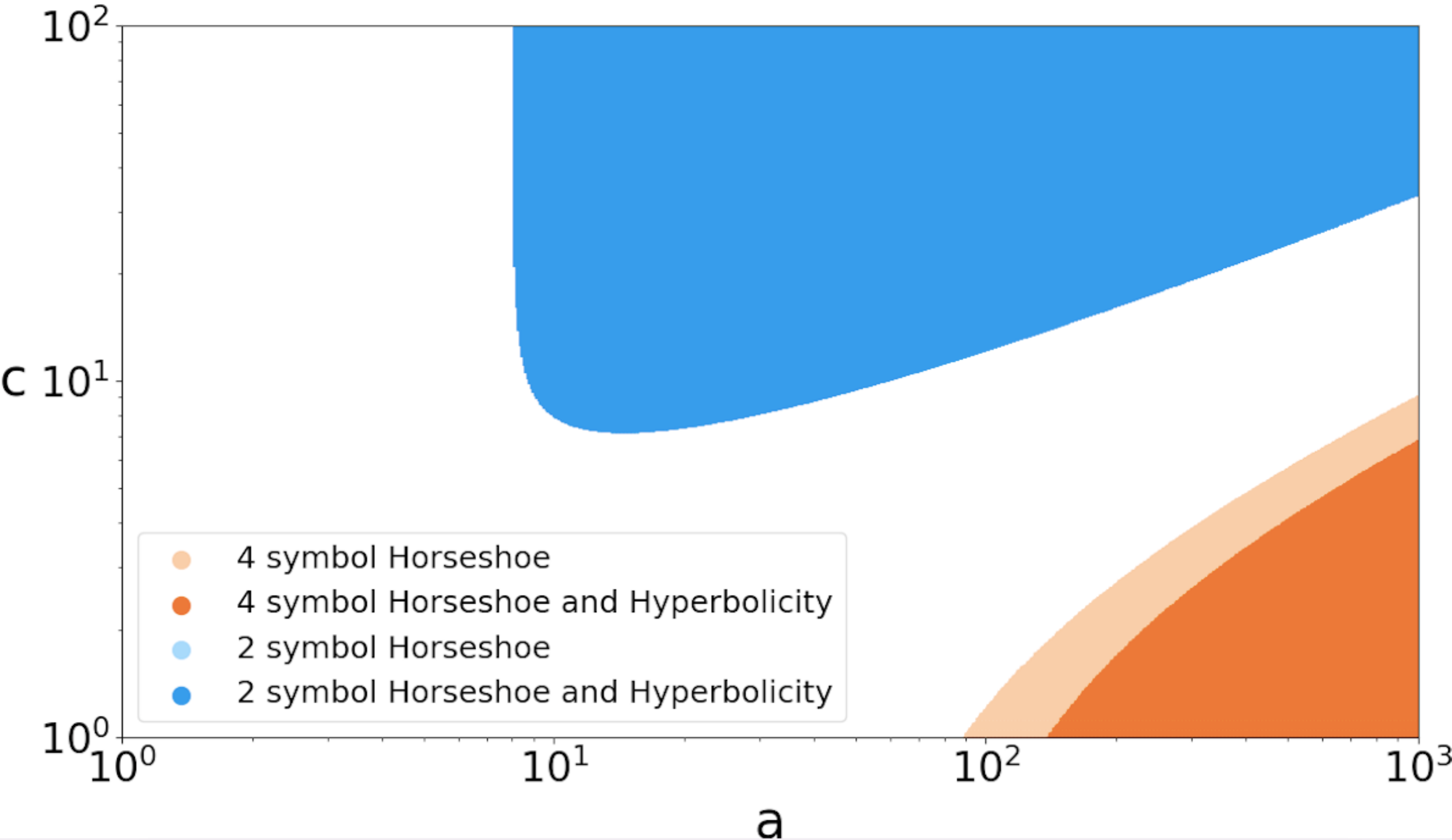}
 \caption{For the anti-itegrable limit with four symbols, 
the region satisfying the topological horseshoe is shown in light orange, 
and the region satisfying both topological horseshoe and uniform hyperbolicity is 
shown in orange.  
For the the anti-itegrable limit with two symbols, 
the region satisfying topological horseshoe is shown in light blue, 
and the region satisfying both topological horseshoe and uniform hyperbolicity is 
shown in blue.  $a=a_0=a_1$ are taken. 
}
 \label{fig:CHPara}
\end{figure}

Finally, we check that the parameters leading to the anti-integrable limit satisfy 
the sufficient condition obtained above for topological horseshoe and uniform 
hyperbolicity. 
The case (A) is given by taking the limit of $a=a_0=a_1 \to \infty$. 
This limit implies that $r\rightarrow 2\sqrt{2a}$, so 
it turns out that the conditions in A-2) and A-3) in Theorem \ref{the:CHA} hold. 
For the case (B), the anti-integrable limit is obtained 
by taking the limit of $a=a_0=a_1\rightarrow \infty$ and $\gamma\rightarrow \infty$ 
with $c=\gamma\sqrt{a}$ being fixed. 
In this case, $R\rightarrow\sqrt{a}$, $W^*\rightarrow 0$ and $Z^*=\sqrt{a}$ follow, 
and the conditions in B-2) and B-3) in Theorem \ref{the:CHB} are satisfied. 
Figure~\ref{fig:CHPara} illustrates the parameter regions 
in which topological horseshoe and uniform 
hyperbolicity hold.

\setcounter{equation}{0}
\section{Summary}
\label{sec:summary}

We have derived a sufficient condition for topological horseshoe and 
uniform hyperbolicity of  the coupled \He map around the anti-integrable limits. 
The coupled \He map introduced here has at least two types of anti-integrable limits, 
which were obtained by taking appropriate limits on the nonlinear parameters $a_0$, $a_1$ and 
a coupling strength $c$. 
The strategy of specifying the existence domain of the non-wandering set, and showing 
topological horseshoe and uniform hyperbolicity is a straightforward generalization 
of the approach taken in Ref.~\cite{DN79}. 
It is specific to higher dimensional maps to have different types of horseshoe, and
it does not happen in 2-dimensional maps. In a subsequent paper \cite{Li23},  we will further introduce topologically different types of 
horseshoe that are impossible in two dimensions by studying a family of H\'enon-type mappings. 

Since the conditions obtained are sufficient ones, as in the case of 
the 2-dimensional \He map~\cite{DN79}, 
one can expect that the parameter domain with topological horseshoe and 
uniform hyperbolicity must be further extended, possibly to 
the situation where an analog of the first tangency happens~\cite{Sterling99,Bedford04}. 
A plausible approach to this problem would be to use 
a computer-assisted proof developed in Refs.~\cite{Arai07,Arai07_2}.
Furthermore, it is interesting to investigate the transition between the two types of 
horseshoes found in the present work. 
Such a transition, if it exists, will induce a kind of bifurcation in higher dimensions. 

Another question to be addressed in the future is 
whether other types of horseshoes exist in the parameter space. 
We have studied here only in the symmetric situation $a_0 = a_1$, but it is by no means obvious 
whether the situation associated with three symbols appears or not.  
If this is the case, this also provides a new type of horseshoe, which appears only 
in higher dimensional maps.

\color{black}

\section*{Acknowledgement}
J.L. and A.S. acknowledge financial support from Japan Society for the Promotion of Science (JSPS) through JSPS Postdoctoral Fellowship for Research in Japan (Standard). 
This work has been supported by JSPS KAKENHI Grant No. 17K05583, and also by JST, the establishment of university fellowships towards the creation of science technology innovation, Grant Number JPMJFS2139.

\section*{References}


\end{document}